\newtheorem{theorem}{Theorem}
\newtheorem{corollary}[theorem]{Corollary}
\newtheorem{definition}[theorem]{Definition}
\newtheorem{example}[theorem]{Example}
\newtheorem{remark}[theorem]{Remark}
\newenvironment{proof}[1][Proof]{\noindent\textbf{#1.} }{\ \rule{0.5em}{0.5em}}
\begin{document}

\title{\textbf{On two-dimensional supersymmetric quantum mechanics, pseudoanalytic functions and transmutation operators}}
\author{Alex Bilodeau and S\'{e}bastien Tremblay \medskip\\ Department of Mathematics and Computer Science, \\ University of Quebec, Trois-Rivi\`{e}res, Qu\'{e}bec, G9A 5H7, Canada}
\maketitle

\begin{abstract}
Pseudoanalytic function theory is considered to study a two-dimensional supersymmetric quantum mechanics system. Hamiltonian components of the superhamiltonian are factorized in terms of one Vekua and one Bers derivative operators. We show that imaginary and real solutions of a Vekua equation and its Bers derivative are ground state solutions for the superhamiltonian. The two-dimensional Darboux and pseudo-Darboux transformations correspond to Bers derivatives in the complex plane. Results on the completeness of the ground states are obtained. Finally, superpotential is studied in the separable case in terms of transmutation operators. We show how Hamiltonian components of the superhamiltonian are related to the Laplacian operator using these transmutation operators.

\textbf{Keywords}: supersymetric quantum mechanics, pseudoanalytic
functions, generalized analytic functions, transmutation operators, Schr\"{o}dinger operators, Darboux transformations, complete system of solutions

PACS numbers: 02.30.-f, 02.30.Tb, 30G20, 35J10

\end{abstract}

\section{Introduction}
Supersymmetric quantum mechanics (SUSY QM) \cite{Junker, Witten} provides an interesting framework to investigate the problem of spectral equivalence of Hamiltonians, which,
historically, has been constructed as a factorization method in quantum mechanics \cite{Infeld} and as
Darboux transformations in mathematical physics \cite{Darboux, MS, Sabatier}. The SUSY QM formalism with an arbitrary ($d > 1$) dimensionality of space were constructed and investigated in and, in the last ten years, more attention has been given in the literature to the study of the two-dimensional SUSY QM \cite{Ioffe1, Ioffe2, IMGV, PSH} and the Darboux transformation for the Dirac equation \cite{Pozdeeva}. Such multidimensional models contain matrix potentials \cite{SP} which are not particularly exotic in quantum mechanics. For instance, the two-dimensional generalization of SUSY
QM considered in this paper was successfully used by Ioffe \emph{et al}. to describe the spectrum of the
Pauli operator describing spin $1/2$ fermion in the external electrostatic and magnetic field \cite{AI, IKNN, IN}.

On the other hand, theory of pseudoanalytic functions \cite{Berskniga,BersStat,Polozhy,Tutschke,Vekua} is one of the classical branches of
complex analysis extending the concepts and ideas from analytic function theory
onto a much more general situation and involving linear elliptic equations and
systems with variables coefficients. Its development in forties-fifties of the last
century was fast and deep and historically represented an important impulse to
the progress in the general theory of elliptic systems. Nevertheless the important
obstacles for a further development of pseudoanalytic function theory were its
limited practical applications together with the fact that many important results remained in the level of existence without a possibility to make them really applicable for solving problems of mathematical physics. Recent progress in
pseudoanalytic function theory reported in \cite{APFT} shows its deep relation to the
stationary Schr\"{o}dinger equation and more general linear second
order elliptic equations and includes new results which allow one to make the basic objects of
the theory and their applications fully explicit. Among other results it
is worth mentioning the possibility to obtain complete systems of solutions to second order elliptic equations with variable coefficients and their use for solving related boundary and eigenvalue problems.

In the present paper, we study known two-dimensional SUSY QM in terms of pseudoanalytic function theory. In Sections~2 and 3, we introduce preliminary notions of pseudoanalytic function theory and two-dimensional SUSY QM. We consider the two-dimensional SUSY QM studied by Ioffe \emph{et al} \cite{ABI2, ABIE, MS} which has been also considered for the Darboux transformation for the matrix Dirac equation \cite{Pozdeeva}. Then, in Section~4, Vekua equations of a special form, called main Vekua equations and related to stationary Schr\"{o}dinger equations, will be considered to study the two-dimensional SUSY QM system. Transmutation operators are introduced in Section~5 where, using these operators, we study in details separable superpotentials.

\bigskip

\section{Pseudoanalytic function theory}

This section is based on some results presented in \cite{Berskniga, BersStat, APFT}. Let $\Omega$ be a domain in $\mathbb{R}^{2}$. Throughout the whole paper we suppose that $\Omega$ is a simply connected domain and use the usual notations $z=x+\mathrm{i}y$, $\overline z=x-\mathrm{i}y$, $\partial_{z}=\displaystyle \frac{1}{2}\big(\frac{\partial}{\partial x}-\mathrm{i} \frac{\partial}{\partial y}\big)$ and $\partial_{\overline z}=\displaystyle \frac{1}{2}\big(\frac{\partial}{\partial x}+\mathrm{i} \frac{\partial}{\partial y}\big)$.

\begin{definition}
A pair of complex functions $F$ and $G$ possessing in $\Omega$ partial
derivatives with respect to the real variables $x$ and $y$ is said to be a
generating pair if it satisfies the inequality
\begin{equation}
\operatorname{Im}(\overline{F}G)\neq 0\qquad\text{in }\Omega.
\label{ImFG}
\end{equation}
\end{definition}
The inequality (\ref{ImFG}) means that $F$ and $G$ are independent in the sense that any complex-valued function $W$ defined in $\Omega$ can be uniquely expressed in the form
$$
W=\phi F+\psi G,\quad \forall z\in \Omega,
$$
where $\phi$ and $\psi$ are two real-valued functions of the variables $x$ and $y$. In other words, the pair $(F,G)$ generalizes the pair $(1,\mathrm{i})$ corresponding to the usual complex analytic function
theory. Sometimes it is convenient to associate with the function $W$
the function $\omega=\phi+\mathrm{i}\psi$. The correspondence between $W$ and $\omega$
is one-to-one.

\begin{definition}
Let $(F,G)$ be a generating pair in $\Omega$ and the function $W=\phi F+\psi G$ be defined in a neighborhood of $z_0\in \Omega$. We say that the function $W$ possesses the $(F,G)$-derivative $\overset{\circ}{W}$ at $z_0$ if the (finite) limit
$$
\overset{\circ}{W}(z_0)=\frac{\mathrm{d}_{(F,G)}W}{\mathrm{d}z}\Big|_{z=z_0}=\lim_{z\rightarrow z_0}\frac{W(z)-\phi(z_0)F(z)-\psi(z_0)G(z)}{z-z_0}
$$
exists, where $\phi(z_0)$ and $\psi(z_0)$ are the unique real constants such that $W(z_0)=\phi(z_0) F(z_0)+\psi(z_0) G(z_0)$.
\end{definition}

\begin{theorem}\cite{Berskniga}
\label{FGderivativetheo}
Let $(F,G)$ be a generating pair in some
open domain $\Omega $ and $W\in C^1(\Omega)$. The $(F,G)$-derivative
$\overset{\circ}{W}$ exists and has the form
\begin{equation}
\label{generalFGderiv}
\overset{\circ}{W}=(\partial_z\phi) F+(\partial_z\psi) G=\partial_z W-A W-B \overline{W}
\end{equation}%
if and only if
\begin{equation}
\partial_{\overline{z}}W=a W+b \overline{W},
\label{vekua}
\end{equation}
where $a,\ b,\ A$ and $B$ are called the \textit{characteristic coefficients} associated with the pair $(F,G)$ in $\Omega$, defined by the formulas
\begin{align*}
a=a_{(F,G)} =&-\displaystyle \frac{\overline{F}\partial_{\overline{z}}G-
\overline{G} \partial_{\overline{z}}F}{F\overline{G}-\overline{F}G}, & b=b_{(F,G)} =&\displaystyle%
\frac{F\partial_{\overline{z}}G-G\partial_{\overline{z}}F}{F\overline{G}-\overline{F}G}
\\*[2ex]
A=A_{(F,G)}=&-\displaystyle\frac{\overline{F}\partial_z G-\overline{G}\partial_z F}{F%
\overline{G}-\overline{F}G}, & B=B_{(F,G)}=&\displaystyle\frac{F\partial_zG-G\partial_z F}{F%
\overline{G}-\overline{F}G}.%
\end{align*}
\end{theorem}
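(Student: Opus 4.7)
The plan is to compute the difference quotient defining $\overset{\circ}{W}(z_0)$ directly, isolate the obstruction to its existence, and then recognize that obstruction as the Vekua equation (\ref{vekua}).

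Writing $W = \phi F + \psi G$ with $\phi, \psi$ the unique real functions on $\Omega$, the difference quotient in the definition becomes
\[
\frac{W(z) - \phi(z_0)F(z) - \psi(z_0)G(z)}{z-z_0} = \frac{[\phi(z)-\phi(z_0)]F(z) + [\psi(z)-\psi(z_0)]G(z)}{z-z_0}.
\]
Using $W\in C^1(\Omega)$, I would Taylor expand $\phi(z)-\phi(z_0) = (\partial_z\phi)(z_0)(z-z_0) + (\partial_{\overline z}\phi)(z_0)(\overline z - \overline{z_0}) + o(|z-z_0|)$ and similarly for $\psi$. The quotient then reads
\[
(\partial_z\phi)F(z) + (\partial_z\psi)G(z) + \frac{\overline z - \overline{z_0}}{z-z_0}\bigl[(\partial_{\overline z}\phi)F(z) + (\partial_{\overline z}\psi)G(z)\bigr] + o(1),
\]
with partial derivatives evaluated at $z_0$. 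The factor $(\overline z - \overline{z_0})/(z-z_0)$ has no limit as $z\to z_0$, so the $(F,G)$-derivative exists if and only if its coefficient vanishes pointwise, namely $(\partial_{\overline z}\phi)F + (\partial_{\overline z}\psi)G = 0$; when this holds one reads off $\overset{\circ}{W}(z_0) = (\partial_z\phi)F + (\partial_z\psi)G$, giving the first equality in (\ref{generalFGderiv}).

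Next I would translate this vanishing condition into (\ref{vekua}). Differentiating $W=\phi F+\psi G$ yields $\partial_{\overline z}W = (\partial_{\overline z}\phi)F + (\partial_{\overline z}\psi)G + \phi\,\partial_{\overline z}F + \psi\,\partial_{\overline z}G$, so the vanishing condition is equivalent to $\partial_{\overline z}W = \phi\,\partial_{\overline z}F + \psi\,\partial_{\overline z}G$. Rewriting the stated right-hand side as $aW + b\overline W = \phi(aF + b\overline F) + \psi(aG + b\overline G)$, equality for all $\phi,\psi$ reduces to the $2\times 2$ linear system $aF + b\overline F = \partial_{\overline z}F$, $aG + b\overline G = \partial_{\overline z}G$. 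Here the generating pair condition (\ref{ImFG}) is crucial: its determinant $F\overline G - \overline F G = -2\mathrm{i}\operatorname{Im}(\overline F G)$ is nonzero throughout $\Omega$, so Cramer's rule yields unique $a,b$ which match exactly the formulas in the statement.

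Finally, the identification $(\partial_z\phi)F + (\partial_z\psi)G = \partial_z W - AW - B\overline W$ follows from the parallel algebraic manipulation with $\partial_z$ replacing $\partial_{\overline z}$: expand $\partial_z W$, impose $AF + B\overline F = \partial_z F$ and $AG + B\overline G = \partial_z G$, and solve by Cramer's rule to recover the stated expressions for $A$ and $B$. The only genuinely nontrivial step is the extraction of the direction-dependent factor $(\overline z - \overline{z_0})/(z-z_0)$ from the difference quotient, which is the precise reason the Vekua equation appears; everything else is linear algebra made available by the nondegeneracy (\ref{ImFG}) of the generating pair.
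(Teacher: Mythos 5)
The paper states this theorem as a cited classical result of Bers and supplies no proof of its own, so there is nothing internal to compare against; your argument is correct and is essentially the standard one. Splitting the difference quotient into a convergent part and a direction-dependent multiple of $(\overline z-\overline{z_0})/(z-z_0)$, forcing the coefficient $(\partial_{\overline z}\phi)F+(\partial_{\overline z}\psi)G$ to vanish, and then recovering $a,b,A,B$ by Cramer's rule from the nondegenerate systems $aF+b\overline F=\partial_{\overline z}F$, $aG+b\overline G=\partial_{\overline z}G$ (and likewise with $\partial_z$) is exactly how the equivalence is classically established, with the nonvanishing of $F\overline G-\overline F G$ guaranteed by (\ref{ImFG}). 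The only point worth making explicit is that the Taylor expansions of $\phi$ and $\psi$ with $o(|z-z_0|)$ remainder require their differentiability at $z_0$; this follows from $W\in C^1(\Omega)$ together with the differentiability of $F$ and $G$, since $\phi$ and $\psi$ are obtained by solving the linear system $W=\phi F+\psi G$, $\overline W=\phi\overline F+\psi\overline G$ whose determinant never vanishes.
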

Notice that $F\overline{G}-\overline{F}G=-2\mbox{i} \operatorname{Im}(\overline{F}G)\neq 0$ from (\ref{ImFG}).

Equation (\ref{vekua}), generalizing the Cauchy-Riemann system, is called the \textbf{Vekua equation}. A function such that $\overset{\circ}{W}$ exists everywhere on $\Omega$ is
called $(F,G)$-\textbf{pseudoanalytic function}.

\begin{remark}
We notice that the Vekua equation~(\ref{vekua}) can be rewritten in the form $\phi_{\bar{z}}F+\psi_{\bar{z}}G=0$. Also, the functions $F$ and $G$ are $(F,G)$-pseudoanalytic functions where $\overset{\circ}{F}\equiv 0\equiv \overset{\circ}{G}$.
\end{remark}

\begin{definition}
\label{DefSuccessor}
Let $(F,G)$ and $(F_{1},G_{1})$ - be two generating pairs
in $\Omega$. $(F_{1},G_{1})$ is called successor of $(F,G)$ and $(F,G)$ is
called predecessor of $(F_{1},G_{1})$ if%
\begin{equation}
a_{(F_{1},G_{1})}=a_{(F,G)}\qquad\text{and}\qquad b_{(F_{1},G_{1})}%
=-B_{(F,G)}.
\label{coeffderiv}
\end{equation}
\end{definition}
This definition arises naturally in relation to the notion of the $(F,G)$-derivative due to
the following fact.

\begin{theorem}
\label{ThBersDer}Let $W$ be an $(F,G)$-pseudoanalytic function and let
$(F_{1},G_{1})$ be a successor of $(F,G)$. Then $\overset{\circ}{W}$ is an
$(F_{1},G_{1})$-pseudoanalytic function, i.e.
$$
\partial_{\overline{z}}\overset{\circ}{W}=a_{(F_1,G_1)}\overset{\circ}{W}+b_{(F_1,G_1)}\overline{\overset{\circ}{W}}=
a_{(F,G)}\overset{\circ}{W}-B_{(F,G)}\overline{\overset{\circ}{W}}.
$$
\end{theorem}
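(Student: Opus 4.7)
The plan is to apply $\partial_{\overline{z}}$ directly to the alternate expression
$\overset{\circ}{W}=\partial_{z}W-AW-B\overline{W}$ supplied by Theorem~\ref{FGderivativetheo}, and to verify that the result coincides with $a_{(F,G)}\overset{\circ}{W}-B_{(F,G)}\overline{\overset{\circ}{W}}$. By the successor relations (\ref{coeffderiv}) this is precisely the statement that $\overset{\circ}{W}$ satisfies the Vekua equation attached to $(F_{1},G_{1})$, so pseudoanalyticity in the successor pair will follow from Theorem~\ref{FGderivativetheo} applied in the reverse direction.

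First I would differentiate $\overset{\circ}{W}=W_{z}-AW-B\overline{W}$ with respect to $\overline{z}$, using $(\overline{W})_{\overline{z}}=\overline{W_{z}}$ and the Vekua equation $W_{\overline{z}}=aW+b\overline{W}$ to rewrite $W_{z\overline{z}}=\partial_{z}(aW+b\overline{W})$. Schwarz's lemma on mixed partials is crucial here. After substitution the expression $\partial_{\overline{z}}\overset{\circ}{W}$ becomes a linear combination of the four independent quantities $W_{z}$, $\overline{W_{z}}$, $W$, $\overline{W}$ whose coefficients are polynomials in $a$, $b$, $A$, $B$ and their first derivatives. In parallel I expand the target $a\overset{\circ}{W}-B\overline{\overset{\circ}{W}}$ in the same basis; the coefficients of $W_{z}$ and $\overline{W_{z}}$ match trivially (they are $a$ and $-B$ respectively), and the matching of the coefficients of $W$ and $\overline{W}$ reduces the theorem to two algebraic identities in the characteristic coefficients.

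To establish those two identities I would invoke the Remark: both $F$ and $G$ are $(F,G)$-pseudoanalytic with $\overset{\circ}{F}\equiv 0\equiv \overset{\circ}{G}$, hence simultaneously
$F_{\overline{z}}=aF+b\overline{F}$ and $F_{z}=AF+B\overline{F}$, with the analogous pair for $G$. Computing $F_{z\overline{z}}$ from the first equation and $F_{\overline{z}z}$ from the second, equating them, and extracting the coefficients of $F$ and $\overline{F}$ (which are independent by $\operatorname{Im}(\overline{F}G)\neq 0$) yields exactly the two required compatibility relations between $a$, $b$, $A$, $B$. Plugging these identities into the computation of the previous paragraph finishes the proof.

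The main obstacle is purely bookkeeping: one must track the complex conjugates carefully, keep $W_{z}$, $\overline{W_{z}}$, $W$, $\overline{W}$ as four genuinely independent quantities, and recognise that the two residual identities are precisely the Bers--Vekua integrability conditions obtained from $\partial_{z}\partial_{\overline{z}}=\partial_{\overline{z}}\partial_{z}$ applied to any generator. Once this is noticed, no deeper idea is needed beyond commuting partials and exploiting the independence of $(F,G)$.
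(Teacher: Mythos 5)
The paper gives no proof of Theorem~\ref{ThBersDer} (it is quoted from Bers), so your argument can only be judged on its own merits. The overall strategy is sound and does lead to the result: expanding $\partial_{\overline z}\overset{\circ}{W}$ in the quantities $W_z$, $\overline{W_z}$, $W$, $\overline W$ one finds the coefficients of $W_z$ and $\overline{W_z}$ are $a$ and $-B$ as you say, and the theorem reduces to the two identities $a_z-A_{\overline z}=|B|^2-|b|^2$ and $b_z-B_{\overline z}=Ab+B\overline A-aB-b\overline a$, which are exactly the compatibility conditions coming from $\partial_z\partial_{\overline z}F=\partial_{\overline z}\partial_z F$ and $\partial_z\partial_{\overline z}G=\partial_{\overline z}\partial_z G$ using $F_{\overline z}=aF+b\overline F$, $F_z=AF+B\overline F$ and the analogous pair for $G$. (You should also say explicitly that you are assuming $W\in C^2$ and the generators in $C^2$, so that the mixed partials commute and $a,b,A,B$ are differentiable; the theorem as stated does not include these hypotheses, and Bers's original argument is arranged to avoid differentiating the characteristic coefficients.)

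There is one genuine, though easily repaired, flaw in your justification of the two residual identities. Writing the compatibility condition for $F$ alone gives a single relation of the form $\alpha F+\beta\overline F=0$, where $\alpha$ and $\beta$ are the two expressions you need to vanish. You claim to conclude $\alpha=\beta=0$ because ``$F$ and $\overline F$ are independent by $\operatorname{Im}(\overline F G)\neq 0$'' --- but that inequality says nothing about $F$ versus $\overline F$, and indeed $\alpha F+\beta\overline F=0$ alone does not force $\alpha=\beta=0$ (take $\alpha=\overline F$, $\beta=-F$). The correct argument is the one your setup already contains: use \emph{both} generators to get the linear system $\alpha F+\beta\overline F=0$, $\alpha G+\beta\overline G=0$, whose determinant $F\overline G-\overline F G=-2\mathrm i\operatorname{Im}(\overline F G)$ is nonzero by (\ref{ImFG}), whence $\alpha=\beta=0$. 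With that substitution the proof is complete.
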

This process of construction of new Vekua
equations associated with the previous ones via relations (\ref{coeffderiv}) can be continued and we arrive
at the following definition.

\begin{definition}
\label{DefSeq}
A sequence of generating pairs $\left\{  (F_{m},G_{m})\right\}$, for $m\in \mathbb{Z}$, is called a generating sequence if $(F_{m+1}%
,G_{m+1})$ is a successor of $(F_{m},G_{m})$. If $(F_{0},G_{0})=(F,G)$, we say
that $(F,G)$ is embedded in $\left\{  (F_{m},G_{m})\right\}  $.
\end{definition}
We say that two generating pairs, $(F, G)$ and $(\widetilde F, \widetilde G)$ are called equivalent if $\widetilde F =a_{11}F + a_{12}G$ and $\widetilde G = a_{21}F + a_{22}G$, the $a_{jk}$ being real constants. A generating sequence $\left\{  (F_{m},G_{m})\right\}$ is said to have period $\mu>0$ if $(F_{m+\mu},G_{m+\mu})$ is equivalent to $(F_{m},G_{m})$, that is their characteristic coefficients coincide.

Let $W$ be an $(F,G)$-pseudoanalytic function. Using a generating sequence in
which $(F,G)$ is embedded we can define the higher derivatives of $W$ by the
recursion formula%
\[
W^{[0]}=W;\qquad W^{[m+1]}=\frac{\mathrm{d}_{(F_{m},G_{m})}W^{[m]}}{\mathrm{d}z},\quad
m\in \mathbb{Z}_{\geq 0}.
\]
Let $(F,G)$ be a generating pair in $\Omega$. Its adjoint generating
pair $(F,G)^{\ast}=(F^{\ast},G^{\ast})$ is defined by the following formulas
\[
F^{\ast}=-\frac{2\overline{F}}{F\overline{G}-\overline{F}G},\qquad G^{\ast
}=\frac{2\overline{G}}{F\overline{G}-\overline{F}G}.
\]
The $(F,G)$-integral is then given by
\begin{equation}
\label{FGintegration}
\int_{\Gamma}W\mathrm{d}_{(F,G)}z=F(z_{1})\operatorname{Re}%
\int_{\Gamma}G^{\ast}W\mathrm{d}z+G(z_{1})\operatorname{Re}\int_{\Gamma}F^{\ast
}W\mathrm{d}z,
\end{equation}
where $\Gamma$ is a rectifiable curve leading from $z_{0}$ to $z_{1}$.

If $W=\phi F+\psi G$ is an $(F,G)$-pseudoanalytic function where $\phi$ and
$\psi$ are real valued functions then
\[
\int_{z_{0}}^{z}\overset{\circ}{W}\mathrm{d}_{(F,G)}z=W(z)-\phi(z_{0})F(z)-\psi
(z_{0})G(z). \label{FGAnt}%
\]
This integral is
path-independent and represents the $(F,G)$-antiderivative of $\overset{\circ}{W}$.

Definition~\ref{DefSeq} of generating sequence is the main ingredient for obtaining the explicit form of formal powers for a certain Vekua equation. Briefly speaking, formal powers are solutions of a Vekua equation (\ref{vekua}) generalizing the usual analytic powers $\big\{(z-z_0)^n\big\}_{n=0}^\infty$ in the sense that locally when $z\rightarrow z_0$ they behave asymptotically like the usual powers and under some additional conditions on the coefficients $a_{(F,G)}$ and $b_{(F,G)}$ they form a complete system
in the space of all solutions of the Vekua equation in the same sense as the analytic powers $\big\{(z-z_0)^n\big\}_{n=0}^\infty$ form a complete system in the space of analytic functions.

\begin{definition}
\label{DefFormalPower}The formal power $Z_{m}^{(0)}(a,z_{0};z)$ with center at
$z_{0}\in\Omega$, coefficient $a$ and exponent $0$ is defined as the linear
combination of the generators $F_{m}$, $G_{m}$ with real constant coefficients
$\lambda$, $\mu$ chosen so that $\lambda F_{m}(z_{0})+\mu G_{m}(z_{0})=a$. The
formal powers with exponents $n\in \mathbb{Z}_{>0} $ are defined by the recursion
formula%
\begin{equation}
Z_{m}^{(n+1)}(a,z_{0};z)=(n+1)\int_{z_{0}}^{z}Z_{m+1}^{(n)}(a,z_{0}%
;\zeta)\mathrm{d}_{(F_{m},G_{m})}\zeta. \label{recformula}%
\end{equation}
\end{definition}

The following properties can be derived from this last definition.

\begin{enumerate}
\item $Z_{m}^{(n)}(a,z_{0};z)$ is an $(F_{m},G_{m})$-pseudoanalytic function
of $z$.

\item If $a_1$ and $a_2$ are two real constants, then
$Z_{m}^{(n)}(a_1+\mathrm{i}a_2,z_{0};z)=a_1Z_{m}%
^{(n)}(1,z_{0};z)+a_2Z_{m}^{(n)}(\mathrm{i},z_{0};z).$

\item The formal powers satisfy the differential relations%
\[
\frac{\mathrm{d}_{(F_{m},G_{m})}Z_{m}^{(n)}(a,z_{0};z)}{\mathrm{d}z}=nZ_{m+1}^{(n-1)}%
(a,z_{0};z).
\]

\item The asymptotic formulas
\[
\lim_{z\rightarrow z_{0}} Z_{m}^{(n)}(a,z_{0};z)= a(z-z_{0})^{n},
\]
hold.
\end{enumerate}
It follows from (\ref{recformula}) that once the generating sequence  $\left\{  (F_{m},G_{m})\right\}$ is known, the formal powers $Z_{m}^{(n)}(a,z_{0};z)$, $n\in \mathbb{Z}_{> 0}$, can be obtained by successive integrations. The scheme of the $(F_m,G_m)$-derivatives is the following:
\[
\begin{array}{cccccccc}
\vdots && \vdots && \vdots&& \vdots & \\
Z^{(2)} && Z_1^{(2)} && Z_2^{(2)}&& Z_3^{(2)} & \cdots\\
&\searrow^{\frac{\mathrm{d}_{(F,G)}}{\mathrm{d}z}}&&\searrow^{\frac{\mathrm{d}_{(F_1,G_1)}}{\mathrm{d}z}}&&\searrow^{\frac{\mathrm{d}_{(F_2,G_2)}}{\mathrm{d}z}}& & \\
 Z^{(1)} && Z_1^{(1)} && Z_2^{(1)}&& Z_3^{(1)} & \cdots\\
  &\searrow^{\frac{\mathrm{d}_{(F,G)}}{\mathrm{d}z}}&&\searrow^{\frac{\mathrm{d}_{(F_1,G_1)}}{\mathrm{d}z}}&&\searrow^{\frac{\mathrm{d}_{(F_2,G_2)}}{\mathrm{d}z}}& & \\
Z^{(0)} && Z_1^{(0)} && Z_2^{(0)} && Z_3^{(0)}&  \cdots
\end{array}
\]

\begin{definition}
Let $W(z)$ be a given $(F,G)$-pseudoanalytic function defined for small values
of $\left\vert z-z_{0}\right\vert $. The series
\[
\sum_{n=0}^{\infty}Z^{(n)}(a_n,z_{0};z) \label{Taylorseries}%
\]
with the coefficients given by
$$
a_{n}=\frac{W^{[n]}(z_{0})}{n!}
$$
is called the Taylor series
of $W(z)$ at $z_{0}$, formed with formal powers.
\end{definition}

\begin{definition}
A generating pair $(F,G)$ is called complete if these functions are defined and satisfy the H\"{o}lder condition for all finite values of $z$, the limits $F(\infty)$ and $G(\infty)$ exist, $\operatorname{Im\big(\overline{F(\infty)}G(\infty)\big)>0}$ and the functions $F(1/z)$, $G(1/z)$ also satisfy the H\"{o}lder condition. A complete generating pair is called normalized if $F(\infty)=1$ and $G(\infty)=\mathrm i$.
\end{definition}

For now on we assume that $(F,G)$ is a complete normalized generating pair. Then the following completeness results were obtained. (Following \cite{Berskniga}, we shall say that a sequence of functions $f_n$ converges normally in a domain $\Omega$ if it converges uniformly on every bounded closed subdomain of $\Omega$.)

\begin{theorem}[Expansion theorem \cite{BersApprox}]
\label{Expansion}
Let $W$ be an $(F,G)$-pseudoanalytic function defined for $|z-z_0|< R$. Then it admits a unique expansion of the form $W(z)=\sum_{n=0}^\infty Z^{(n)}(a_n,z_0;z)$ which converges normally for $|z-z_0|< \theta R$, where $\theta$ is a positive constant depending on the generating sequence.
\end{theorem}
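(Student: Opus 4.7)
The plan is to prove uniqueness of the coefficients first, then to control the growth of both the formal powers and of $W^{[n]}(z_0)$ well enough to force normal convergence on a smaller disk. Uniqueness is the easy half. Suppose $W(z)=\sum_{n=0}^{\infty}Z^{(n)}(a_n,z_0;z)$ converges normally on $|z-z_0|<R$. I apply the $(F_m,G_m)$-derivatives term by term (justified, as in the analytic case, by normal convergence on bounded closed subdomains) and use property 3 of Definition~\ref{DefFormalPower} together with the asymptotic formula. Since $Z_{m+k}^{(n-k)}(a,z_0;z_0)=0$ for $n>k$ while $Z_{m+k}^{(0)}(a,z_0;z_0)=a$, the $k$-th iterated $(F,G)$-derivative evaluated at $z_0$ collapses to $k!\,a_k$, forcing $a_k=W^{[k]}(z_0)/k!$. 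The coefficients are the pseudoanalytic Taylor coefficients and are therefore uniquely determined.

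For existence I define $a_n=W^{[n]}(z_0)/n!$ and aim to show $\sum_{n=0}^{\infty}Z^{(n)}(a_n,z_0;z)$ converges normally on a disk $|z-z_0|<\theta R$. Two families of estimates are required. The first is a growth bound for the formal powers of the shape $|Z^{(n)}(1,z_0;z)|,\;|Z^{(n)}(\mathrm{i},z_0;z)|\leq C\,(M|z-z_0|)^n$ on each compact subset, where $M>0$ depends only on the characteristic coefficients along the generating sequence $\{(F_m,G_m)\}$; by property 2 this upgrades to arbitrary $a$. I would establish these inductively from the integral recursion (\ref{recformula}) combined with the explicit form (\ref{FGintegration}) of the $(F,G)$-integral, using the H\"older regularity and uniform boundedness on compacts of the $F_m$, $G_m$ and of their adjoints. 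The second family is a Cauchy-type inequality $|W^{[n]}(z_0)|\leq n!\,C'\,(M'/R)^n$ controlling the decay of the Taylor coefficients of a pseudoanalytic function holomorphic-like in the disk of radius $R$.

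Putting the two estimates together gives $|Z^{(n)}(a_n,z_0;z)|\leq C C'\,(MM'|z-z_0|/R)^n$, so the series converges normally whenever $|z-z_0|<\theta R$ with $\theta=1/(MM')$, depending only on the generating sequence as claimed.

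The main obstacle will be the Cauchy-type estimate for $W^{[n]}(z_0)$, since, unlike in the holomorphic case, no integral formula is directly available for derivatives of pseudoanalytic functions. The natural route is through the similarity principle: locally one factors $W=e^{s}f$ with $f$ analytic and $s$ H\"older-continuous, where $s$ is determined by the Vekua coefficients $a$, $b$ of (\ref{vekua}). Classical Cauchy estimates applied to $f$ on slightly smaller concentric disks, combined with a careful bookkeeping of how each application of the $(F_m,G_m)$-derivative interacts with the factor $e^{s}$ at the $m$-th level, yield the required bound. The subtle point is ensuring that the constants $M$ and $M'$ are uniform in $m$, so that $\theta$ depends only on the generating sequence and not on $W$ or on $R$; tracking this uniformity across the entire generating sequence, rather than at a fixed level, is the delicate part of the argument.
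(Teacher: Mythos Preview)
The paper does not supply a proof of this statement: it is quoted with attribution to Bers's 1956 article \cite{BersApprox} and used as a black box. There is therefore nothing in the paper to compare your argument against.

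As a sketch of Bers's classical proof your outline is structurally sound. Uniqueness via termwise $(F_m,G_m)$-differentiation and evaluation at $z_0$ is correct and easy. For existence, the two ingredients you name---inductive growth bounds on $Z^{(n)}$ from the integral recursion~(\ref{recformula}) and Cauchy-type decay for $W^{[n]}(z_0)$---are exactly the ones Bers uses, and combining them as you do yields the radius $\theta R$.

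That said, your proposal is a plan rather than a proof, and the gap you yourself flag is real. The similarity principle at level $m$ gives $W^{[m]}=e^{s_m}f_m$ with $f_m$ analytic, but the $(F_m,G_m)$-derivative does not reduce to $\partial_z f_m$: there are correction terms from $s_m$ and from $A_{(F_m,G_m)}$, $B_{(F_m,G_m)}$. To extract a bound of the form $|W^{[n]}(z_0)|\le n!\,C'(M'/R)^n$ with $M'$ independent of $n$ you must control these corrections uniformly along the entire generating sequence, which in Bers's argument relies on the complete normalized hypothesis and a fairly intricate iteration on nested disks. Your sketch acknowledges this difficulty without resolving it, so the proposal is an accurate roadmap but not yet a proof.
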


\begin{theorem}[Runge's approximation \cite{BersApprox}]
\label{Runge}
A pseudoanalytic function defined in a simply connected domain can be approached by a normally convergent sequence of formal polynomials (linear combinations of formal powers with positive exponents).
\end{theorem}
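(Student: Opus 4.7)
The plan is to adapt the classical proof of Runge's theorem for holomorphic functions to the pseudoanalytic setting, with the Expansion Theorem~\ref{Expansion} replacing the ordinary Taylor expansion. I would first reduce the statement to the following compact approximation claim: if $K\subset\Omega$ is compact with $\mathbb{C}\setminus K$ connected, then every $(F,G)$-pseudoanalytic function $W$ defined in a neighborhood of $K$ can be uniformly approximated on $K$ by formal polynomials centered at a single fixed base point $z_{*}\in\Omega$. Assuming this, the theorem follows by exhausting the simply connected domain $\Omega$ by an increasing sequence of such compacta $K_{n}$ (whose existence is guaranteed by the simple connectedness of $\Omega$), applying the compact claim on each $K_{n}$ with error less than $1/n$, and invoking the definition of normal convergence.

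To establish the compact approximation claim, I would cover $K$ by finitely many open disks $D(z_{j},r_{j})\subset\Omega$ on which, by Theorem~\ref{Expansion}, $W$ admits a normally convergent Taylor expansion in the formal powers $Z^{(n)}(\cdot,z_{j};z)$. Truncating each expansion at sufficiently high order and patching the local approximations together over the cover yields a uniform approximation of $W$ on $K$ by a finite linear combination of formal powers whose centers are distributed among the $z_{j}$.

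The main obstacle is the pole-pushing step required to replace the various centers $z_{j}$ by the single base point $z_{*}$. Given a formal power $Z^{(n)}(a,z_{j};z)$, one must approximate it uniformly on $K$ by formal polynomials centered at $z_{*}$. The natural strategy is to join $z_{j}$ to $z_{*}$ by a finite chain of intermediate centers $\zeta_{0}=z_{j},\zeta_{1},\ldots,\zeta_{N}=z_{*}$ along a path in $\mathbb{C}\setminus K$ (possible because this complement is connected and $z_{*}$ is chosen outside $K$), arranged so that the disk of convergence furnished by Theorem~\ref{Expansion} about each $\zeta_{k}$ contains $\zeta_{k+1}$. Re-expanding a formal power based at $\zeta_{k}$ into a convergent series of formal powers based at $\zeta_{k+1}$, truncating, and iterating along the chain transports every center to $z_{*}$ with controlled error. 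The delicate point is to verify that the region of validity of each re-expansion actually exhausts $K$ and that the cumulative error along the chain remains uniformly small; this relies on the asymptotic formula $Z^{(n)}(a,z_{0};z)\sim a(z-z_{0})^{n}$ as $z\to z_{0}$ together with the growth estimates for formal powers that follow from the completeness and normalization of the generating sequence.
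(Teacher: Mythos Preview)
The paper does not prove this theorem; it is quoted as a result of Bers \cite{BersApprox}. Nonetheless, your proposed argument has two genuine gaps that would prevent it from going through as written.

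First, the ``patching'' step is not well defined. If $W\approx P_{j}$ on each disk $D(z_{j},r_{j})$ with $P_{j}$ a formal polynomial centered at $z_{j}$, there is no mechanism to produce from the $P_{j}$ a single function approximating $W$ on all of $K$. You cannot multiply by a partition of unity, since that destroys pseudoanalyticity, and the $P_{j}$ for different $j$ are genuinely different functions even after recentering. A single global approximant has to come from a global representation of $W$, not from local Taylor truncations glued together.

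Second, your ``pole-pushing'' step is misconceived. The centers $z_{j}$ lie in (or near) $K$, so there is no path from $z_{j}$ to $z_{*}$ inside $\mathbb{C}\setminus K$; more importantly, the formal powers $Z^{(n)}(a,z_{j};z)$ with $n\ge 0$ are \emph{regular} at $z_{j}$ (they vanish there to order $n$), so there is no singularity to push. Re-expanding such a function around a nearby point $\zeta'$ gives convergence on a disk about $\zeta'$, not on a region containing $K$; this is the opposite of the classical situation, where $1/(z-\zeta)$ with $\zeta\notin K$ re-expands around $\zeta'$ near $\zeta$ in a region $|z-\zeta|>|\zeta-\zeta'|$ that does contain $K$.

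Bers's actual proof follows the classical Runge pattern but with the correct singular objects: one uses the Cauchy-type integral representation for pseudoanalytic functions, whose kernel is essentially a global formal power of exponent $-1$ with its singularity on a contour surrounding $K$ in $\Omega$. Approximating the contour integral by Riemann sums yields a finite combination of such kernels with singularities lying on the contour, hence in $\mathbb{C}\setminus K$. It is \emph{these} singularities that are then pushed to infinity through the connected complement (using the simple connectedness of $\Omega$), and each kernel with singularity at infinity admits a convergent expansion in formal powers of nonnegative exponent about the fixed base point. Your outline would become correct if you replaced the local Taylor truncations by this Cauchy-integral step and pushed the kernel singularities rather than the Taylor centers.
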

\section{Two-dimensional SUSY QM}
In \cite{Witten} Witten proposed the conventional one-dimensional SUSY QM characterized
by the simplest realization of the supersymmetric algebra

\begin{equation}
\label{susyAlgebra}
\{\hat Q^+,\hat Q^-\}=\hat H,\quad [\hat H,\hat Q^{\pm}]=0,\quad \{\hat Q^{+},\hat Q^{+}\}=\{\hat Q^{-},\hat Q^{-}\}=0,
\end{equation}
where $\{A,B\}=AB+BA$ and $[A,B]=AB-BA$. The operator $\widehat H$ is the superhamiltonian represented by the diagonal matrix $\mathrm{diag}(H^{(0)},H^{(1)})$, with $H^{(0)}=q^+q^-=-\partial^2+V^{(0)}$, $H^{(1)}=q^-q^+=-\partial^2+V^{(1)}$, $q^{\pm}=\mp \partial+\partial \chi$ and $\partial\equiv \frac{\mathrm{d}}{\mathrm{d}x}$. The supercharge operators $\hat Q^{\pm}$ are off diagonal matrices with elements $q^{\pm}$. The function $\chi$ called the superpotential is a smooth function defined via the zero-energy wave function $\psi_0\equiv \exp(-\chi)$ of the Schr\"{o}dinger equation $H^{(0)}$, i.e. $H^{(0)}\psi_0=0$ (possibly after an appropriate choice of the origin for the energy scale). The (anti)commutation relations of SUSY algebra (\ref{susyAlgebra}) mean, respectively, factorization of Hamiltonians, intertwining of $H^{(i)}$ with $q^{\pm}$ and nilpotent structure of supercharges.

Let us now consider the two-dimensional case where for convenience we use the notations $(x,y)\equiv (x_1,x_2)$, $\partial/\partial x\equiv \partial_1$ and $\partial/\partial y\equiv \partial_2$. The direct
generalization of SUSY QM \cite{ABI2, ABIE, MS} satisfies the conventional Witten's SUSY algebra (\ref{susyAlgebra}). The $4\times 4$ superhamiltonian is realized by the block diagonal matrix
$$
\hat H=\left(
\begin{array}{ccc}
H^{(0)} & 0 & 0  \\
0 & H_{ij}^{(1)} & 0 \\
0 & 0 & H^{(2)}
\end{array}
\right),
$$
with the supercharges operators
$$
\hat Q^+=\left(
\begin{array}{cccc}
0 & 0 & 0 & 0 \\
q_1^- & 0 & 0 & 0 \\
q_2^- & 0 & 0 & 0 \\
0 & p_1^+ & p_2^+ & 0
\end{array}
\right)\quad \mbox{and}\quad \hat Q^-=(\hat Q^+)^\dagger=\left(
\begin{array}{cccc}
0 & q_1^+ & q_2^+ & 0 \\
0 & 0 & 0 & p_1^- \\
0 & 0 & 0 & p_2^- \\
0 & 0 & 0 & 0
\end{array}
\right),
$$
where
\begin{equation}
\label{operatorsqp}
q_i^{\pm}=\mp \partial_i + \partial_i \chi, \quad\quad  p_i^{\pm}=\displaystyle \sum_{k=1}^2 \epsilon_{ik}q_k^{\mp},\quad\quad i,j\in\{1,2\}.
\end{equation}
Here $\chi\in C^2(\Omega,\mathbb R)$ is a real-valued function of the variables $\mathbf x=(x_1,x_2)$ and $\epsilon_{ik}$ is the fundamental antisymmetrical tensor. In particular, we observe that $(\hat Q^{\pm})^2=0$ from (\ref{susyAlgebra}) imply that
\begin{equation}
\label{relationspq}
\displaystyle \sum_{k=1}^2 p_k^+q_k^-=0\qquad \mbox{ and }\qquad \displaystyle \sum_{k=1}^2 q_k^+p_k^-=0.
\end{equation}
Moreover, a direct calculation from (\ref{operatorsqp}) shows that
$$
[q_i^-,q_j^+]=2\partial_i\partial_j \chi.
$$
From the first equality $\{\hat Q^+,\hat Q^-\}=\hat H$ in (\ref{susyAlgebra}) the two scalar Schr\"{o}dinger operators $H^{(0)}$, $H^{(2)}$ and the $2 \times 2$ matrix Schr\"{o}dinger operator
$H_{ij}^{(1)}$ are expressed in terms of the components of supercharges:
\begin{equation}
\label{superHamiltonians}
\begin{array}{rcl}
H^{(0)}&=&\displaystyle \sum_{k=1}^2 q_k^+q_k^-=-\triangle + U^{(0)}=-\triangle+\displaystyle \sum_{k=1}^2 (\partial_k \chi)^2-\partial^2_k \chi, \\*[2ex]
H^{(2)}&=&\displaystyle \sum_{k=1}^2 p_k^+p_k^-=-\triangle + U^{(2)}=-\triangle+\displaystyle \sum_{k=1}^2 (\partial_k \chi)^2+\partial^2_k \chi, \\*[2ex]
H^{(1)}_{ij}&=& q_i^-q_j^++p_i^-p_j^+=\delta_{i,j}H^{(0)}+2\partial_i\partial_j \chi,
\end{array}
\end{equation}
where $\delta_{i,j}$ is the Kronecker delta, $\triangle\equiv \partial_1^2+\partial_2^2$ and the zero-energy
wave functions of the scalar Hamiltonians $H^{(0)}$, $H^{(2)}$ are written as $\psi^{(0)}_0=\mathrm e^{-\chi}$ and $\psi^{(2)}_0=\mathrm e^\chi$, respectively.

The quasi-factorization in (\ref{superHamiltonians}) ensures that the commutation relations in (\ref{susyAlgebra}) hold, and leads to the following intertwining relations for the components of the superhamiltonian $\widehat H$:

\begin{equation}
\label{intertwining}
\begin{array}{c}
H^{(0)}q_i^+=\displaystyle \sum_{k=1}^2 q_k^+H^{(1)}_{ki},\quad\quad q_i^-H^{(0)}=\displaystyle \sum_{k=1}^2 H^{(1)}_{ik}q_k^-,\\*[2ex]
H^{(2)}p_i^+=\displaystyle \sum_{k=1}^2 p_k^+H^{(1)}_{ki},\quad\quad p_i^-H^{(2)}=\displaystyle \sum_{k=1}^2 H^{(1)}_{ik}p_k^-.
\end{array}
\end{equation}
These intertwining
relations (\ref{intertwining}) imply that components of the vector wave functions of the matrix Hamiltonian $H^{(1)}_{ij}$ are connected (up to a constant) with the wave functions of the scalar Hamiltonians $H^{(0)}$ and $H^{(2)}$:
\begin{equation}
\label{equivalence}
\begin{array}{c}
\psi^{(1)}_i(\mathbf x;E)=q_i^- \psi^{(0)}(\mathbf x;E), \quad \psi^{(1)}_i(\mathbf x;E)=p_i^- \psi^{(2)}(\mathbf x;E),\\*[2ex]
\psi^{(0)}(\mathbf x;E)=\displaystyle \sum_{k=1}^2 q_k^+ \psi^{(1)}_k(\mathbf x;E),\quad \psi^{(2)}(\mathbf x;E)=\displaystyle \sum_{k=1}^2 p_k^+ \psi^{(1)}_k(\mathbf x;E).
\end{array}
\end{equation}
We remark that $H^{(0)}$ and $H^{(2)}$ are not intertwined with each other and are not (in general) isospectral, nevertheless relations (\ref{equivalence}) lead
to the connections between the spectrum of the matrix Hamiltonian $H^{(1)}$ and the spectra of the two
scalar ones $H^{(0)},H^{(2)}$
\begin{equation}
\label{H}
H\Psi=E\Psi,\quad H=\mathrm{diag}(H^{(0)},H^{(2)}),\quad \Psi=(\psi^{(0)},\psi^{(2)})^\top
\end{equation}
and
\begin{equation}
\label{Htilde}
H^{(1)}\Phi=E \Phi,\quad \Phi=(\psi_1^{(1)}, \psi_2^{(1)})^\top.
\end{equation}
Here `equivalence' means coincidence of spectra up to zero modes of the operators  $q_i^{\pm}$, $p_i^{\pm}$.
Using relations (\ref{equivalence}) the two-dimensional Darboux transformations \cite{MS, Sabatier} $D$ and $D^{\dagger}$  are defined by
\begin{equation}
\label{2DDarbouxTransf}
\Phi=D \Psi,\quad\quad D=\left(
\begin{array}{cc}
q_1^- & p_1^- \\
q_2^- & p_2^-
\end{array}
\right)
\end{equation}
and
\begin{equation}
\label{2DDarbouxTransf2}
\Psi=D^{\dagger} \Phi,\quad\quad D^{\dagger}=\left(
\begin{array}{cc}
q_1^+ & q_2^+ \\
p_1^+ & p_2^+
\end{array}
\right),
\end{equation}
where
\begin{equation}
\label{DDH}
D^{\dagger}D=H, \qquad DD^{\dagger}=H^{(1)}.
\end{equation}
Moreover, pseudo-Darboux transformations $D_1$ and $D_1^{\dagger}$ can be defined as
\begin{equation}
\label{D1}
D_1=\left(
\begin{array}{cc}
p_2^- & p_1^- \\
q_2^- & q_1^-
\end{array}
\right)\qquad \mbox{ and } \qquad D_1^{\dagger}=\left(
\begin{array}{cc}
p_2^+ & q_2^+ \\
p_1^+ & q_1^+
\end{array}
\right),
\end{equation}
where $D_1D_1^{\dagger}=H$, $D_1^{\dagger} D_1=\widetilde H^{(1)}$ and $\widetilde H^{(1)}:=\left(
\begin{array}{cc}
H_{11}^{(1)} & -H_{12}^{(1)} \\
-H_{21}^{(1)} & H_{22}^{(1)}
\end{array}
\right)$.

\section{Pseudoanalytic functions and SUSY QM}
Let us define first the following complex operators:
\begin{equation}
\label{operatorsV}
\begin{array}{rr}
V=\partial_{\overline z}-\partial_{\overline z}\chi\,C, & \overline V=\partial_{z}-\partial_{z}\chi\,C, \\*[2ex]
\quad V_1=\partial_{\overline z}+\partial_{z}\chi\,C, & \overline{V}_1=\partial_{z}+\partial_{\overline z}\chi\,C,
\end{array}
\end{equation}
together with the operators projecting the real and imaginary parts
$$
P^+=\frac{1}{2}(I+C),\qquad P^-=\frac{1}{2\mathrm i}(I-C),
$$
where $C$ represents the complex conjugate operator. Operators (\ref{operatorsV}) are related by the complex conjugate operator $C$ as
\begin{equation}
\label{CVC}
CV=\overline VC \qquad \text{ and }\qquad CV_1=\overline V_1C.
\end{equation}
Defining now the operator $P:\mathbb {C}\rightarrow \mathbb {R}^2$ by $Pw=(P^-, P^+)^\top w$, we obtain the following result.

\begin{theorem}
The following operator equalities hold:
\begin{equation}
\label{DP=2PV}
DP=2P\overline V, \qquad D^{\dagger}P=-2PV_1,\qquad D_1P=2P\overline V_1,\qquad D_1^{\dagger}P=-2PV,
\end{equation}
for any differentiable complex-valued function in $\Omega$. Moreover, the following factorization are obtained
\begin{equation}
\label{HP=4PVV}
\begin{array}{rclll}
HP&=&-4PV_1\overline V& \qquad H^{(1)}P=-4P\overline VV_1,& \widetilde H^{(1)}P=-4PV\overline V_1,\\
&=& -4P\overline V_1 V,
\end{array}
\end{equation}
for any twice differentiable complex-valued function in $\Omega$.
\end{theorem}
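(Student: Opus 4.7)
The plan is to verify the four first-order intertwinings in~(\ref{DP=2PV}) by a direct real/imaginary decomposition, and then to obtain the second-order factorizations in~(\ref{HP=4PVV}) by composing these identities with the Hamiltonian factorizations $H=D^{\dagger}D=D_{1}D_{1}^{\dagger}$, $H^{(1)}=DD^{\dagger}$, $\widetilde H^{(1)}=D_{1}^{\dagger}D_{1}$ already recorded in and just after~(\ref{DDH}).

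For the first step I fix a differentiable complex-valued function $w=u+\mathrm{i}v$ on $\Omega$, so that $Pw=(v,u)^{\top}$. Each of the left-hand sides $DPw$, $D^{\dagger}Pw$, $D_{1}Pw$, $D_{1}^{\dagger}Pw$ unfolds, through $q_{i}^{\pm}=\mp\partial_{i}+\partial_{i}\chi$ and $p_{i}^{\pm}=\sum_{k}\epsilon_{ik}q_{k}^{\mp}$, into an explicit $\mathbb{R}^{2}$-valued expression linear in $\partial_{i}u$, $\partial_{i}v$, $(\partial_{i}\chi)u$, $(\partial_{i}\chi)v$. On the right, each of $V$, $\overline V$, $V_{1}$, $\overline V_{1}$ applied to $w$ is expanded using $2\partial_{z}=\partial_{1}-\mathrm{i}\partial_{2}$, $2\partial_{\overline z}=\partial_{1}+\mathrm{i}\partial_{2}$ and $Cw=\overline w$, and its real and imaginary parts are read off. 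For example one finds $2\operatorname{Re}(\overline Vw)=\partial_{1}u+\partial_{2}v-(\partial_{1}\chi)u+(\partial_{2}\chi)v$ and $2\operatorname{Im}(\overline Vw)=\partial_{1}v-\partial_{2}u+(\partial_{1}\chi)v+(\partial_{2}\chi)u$, so that $2P\overline Vw=(2\operatorname{Im}\overline Vw,\,2\operatorname{Re}\overline Vw)^{\top}$ matches $DPw$ term by term. The other three identities follow by the same mechanical calculation.

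For the second step the four factorizations follow by composing the first-order identities. From $H=D^{\dagger}D$,
\[
HP=D^{\dagger}(DP)=D^{\dagger}(2P\overline V)=2(D^{\dagger}P)\overline V=-4PV_{1}\overline V,
\]
and analogously $H^{(1)}P=D(D^{\dagger}P)=-2(DP)V_{1}=-4P\overline V V_{1}$. The alternative expression for $HP$ and the factorization of $\widetilde H^{(1)}$ come from $H=D_{1}D_{1}^{\dagger}$ and $\widetilde H^{(1)}=D_{1}^{\dagger}D_{1}$:
\[
HP=D_{1}(D_{1}^{\dagger}P)=-2(D_{1}P)V=-4P\overline V_{1}V,\qquad \widetilde H^{(1)}P=D_{1}^{\dagger}(D_{1}P)=2(D_{1}^{\dagger}P)\overline V_{1}=-4PV\overline V_{1}.
\]

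No substantive obstacle is expected; the entire argument is bookkeeping. The only real subtlety is that $V$, $\overline V$, $V_{1}$, $\overline V_{1}$ are only $\mathbb{R}$-linear, since each contains the antilinear operator $C$, so $P$ cannot be treated as a $\mathbb{C}$-linear change of basis and the identities must be checked separately on real and imaginary components. One must also be attentive to the sign convention $q_{i}^{\pm}=\mp\partial_{i}+\partial_{i}\chi$ and to the antisymmetry of $\epsilon_{ik}$ in the evaluation of $p_{i}^{\pm}$, since these govern the precise way the derivative and potential terms on the two sides line up.
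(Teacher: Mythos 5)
Your proposal is correct and follows essentially the same route as the paper: a direct real/imaginary expansion of $DPw$ (and its three companions) against $2P\overline V w$, etc., followed by composing these first-order identities with the factorizations $H=D^{\dagger}D=D_{1}D_{1}^{\dagger}$, $H^{(1)}=DD^{\dagger}$, $\widetilde H^{(1)}=D_{1}^{\dagger}D_{1}$ from~(\ref{DDH}). Your explicit formulas for $2\operatorname{Re}(\overline Vw)$ and $2\operatorname{Im}(\overline Vw)$ agree with the paper's computation of $DPw$, so nothing is missing.
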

\begin{proof}
Let $w=w_1+\mathrm i w_2$ be differentiable in $\Omega$, then
$$
\begin{array}{rcl}
DPw&=& \left(
         \begin{array}{cc}
           q_1^- & q_2^+ \\
           q_2^- & -q_1^+ \\
         \end{array}
       \right)\left(
                \begin{array}{c}
                  w_2 \\
                  w_1 \\
                \end{array}
              \right)=\left(
                        \begin{array}{c}
                          (\partial_1+\partial_1\chi)w_2+(-\partial_2+\partial_2\chi)w_1 \\
                          (\partial_2+\partial_2\chi)w_2-(-\partial_1+\partial_1\chi)w_1 \\
                        \end{array}
                      \right)\\*[2ex]

&=&P\Big[(\partial_1-\mathrm i\partial_2)(w_1+\mathrm iw_2)-(\partial_1-\mathrm i\partial_2)\chi \cdot (w_1-\mathrm iw_2)\Big]\\*[2ex]
&=& 2P\overline Vw.
\end{array}
$$
Other equalities in (\ref{DP=2PV}) are shown in a similar way and operator equalities (\ref{HP=4PVV}) are the results of (\ref{DDH}) and (\ref{DP=2PV}).
\end{proof}
\begin{remark}
We note in particular that for any \emph{real-valued} function in $C^2(\Omega)$ we obtain the factorizations
$$
\begin{array}{rcr}
H^{(0)}=4\mathrm i V_1\overline V\mathrm i=4\mathrm i \overline V_1 V\mathrm i, && H^{(2)}=-4V_1\overline V=-4\overline V_1V, \\*[2ex]
H^{(1)}_{11}+\mathrm i H^{(1)}_{12}=4\mathrm i V \overline V_1\mathrm i, && H^{(1)}_{22}+\mathrm i H^{(1)}_{21}=-4\overline V V_1,\\*[2ex]
\widetilde H^{(1)}_{11}+\mathrm i \widetilde H^{(1)}_{12}=4\mathrm i\overline V V_1\mathrm i, && \widetilde H^{(1)}_{22}+\mathrm i \widetilde H^{(1)}_{21}=-4 V \overline V_1.
\end{array}
$$
\end{remark}

\begin{remark}
Operators $H$ and $H^{(1)}$ can also be written in terms of complex operators $\partial_z,\partial_{\overline z}$ as
\begin{equation}
\label{Hcomplex}
H=-\triangle +4\left(
\begin{array}{cc}
|\partial_z \chi|^2-\partial_{\overline z}\partial_z \chi &  \\
 & |\partial_z \chi|^2+\partial_{\overline z}\partial_z \chi
\end{array}
\right)
\end{equation}
and
\begin{equation}
\label{Htildecomplex}
H^{(1)}=-\triangle+4\left(
\begin{array}{cc}
|\partial_z \chi|^2+\operatorname{Re}\partial^2_z \chi & -\operatorname{Im}\partial_z^2 \chi  \\
 -\operatorname{Im}\partial_z^2 \chi & |\partial_z \chi|^2-\operatorname{Re}\partial^2_z \chi
\end{array}
\right).
\end{equation}
\end{remark}

\begin{remark}
From equation (\ref{Hcomplex}) we see that potentials $U^{(0,2)}$ defined in (\ref{superHamiltonians}) can be written in the form
$$
\partial_{\overline z}R+|R|^2=\frac{1}{4}U^{(0,2)},
$$
where $R\equiv -\partial_z \chi$ for the potential $U^{(0)}$ and  $R\equiv \partial_z \chi$ for the potential $U^{(2)}$. This equation corresponds to the complex Riccati equation (see \cite{APFT} for more details on this equation).
\end{remark}

Let us now consider the Vekua equation $VW=0$, i.e.
\begin{equation}
\label{mainvekua}
\partial_{\overline z}W=\partial_{\overline z}\chi\,\overline{W}\quad \text{ in }\Omega.
\end{equation}
We call this equation the \textbf{main Vekua equation}. A generating pair for equation (\ref{mainvekua}) is given by
\begin{equation}
\label{(F,G)}
(F,G)=(\mathrm e^{\chi},\ \mathrm{i}\mathrm e^{-\chi}),
\end{equation}
such that $a_{(F,G)}=A_{(F,G)}=0$, $b_{(F,G)}=\partial_{\overline z}\chi$ and $B_{(F,G)}=\partial_{z}\chi$. In this case, we note that from (\ref{generalFGderiv}) the $(F,G)$-derivative is given by
$$
\overset{\circ}{W}=\frac{\mathrm d_{(F,G)}W}{\mathrm dz}=(\partial_z -\partial_{z}\chi\,C)W=\overline VW=(CVC)W,
$$
where second and last equalities come, respectively, from  (\ref{generalFGderiv}) and (\ref{CVC}). Also from (\ref{vekua}) and (\ref{coeffderiv}) we observe that for any given successor $(F_1,G_1)$ of the pair $(F,G)$ the $(F_1,G_1)$-pseudoanalytic functions satisfy Vekua equation
$$
V_1w=0,
$$
where, in particular from theorem \ref{ThBersDer}, we have $V_1\overset{\circ}{W}=0$ when $W$ satisfies the main Vekua equation (\ref{mainvekua}).

\begin{theorem}
\label{solutionVekuaSusy}
Let $W\in C^3(\Omega)$ be a solution of the main Vekua equation (\ref{mainvekua}). Then $\Psi=PW$ and $\Phi=P\overset{\circ}{W}$ are wave function solutions in the kernel of $H$ and $H^{(1)}$, respectively.
\end{theorem}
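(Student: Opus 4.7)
The plan is to use the operator factorizations of the previous theorem together with the Bers derivative property of $\overset{\circ}{W}$. Both claims should fall out of this almost immediately, so the work is really in choosing the right factorization for each case and observing which Vekua equation is being killed.

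For the first claim, I would write $\Psi = PW$ and apply the factorization
\[
HP = -4P\overline{V}_1 V
\]
from equation~(\ref{HP=4PVV}). Since $W$ solves the main Vekua equation $VW=0$ by hypothesis, this gives $H\Psi = HPW = -4P\overline{V}_1(VW) = 0$ directly. Here the choice of factorization matters: among the two expressions listed for $HP$, the one ending in $V$ is the one that recognizes the Vekua equation. This step only needs $W \in C^2$.

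For the second claim, I would take $\Phi = P\overset{\circ}{W}$ and use the factorization $H^{(1)}P = -4P\overline{V}V_1$. The key observation is that $\overset{\circ}{W}$ annihilates $V_1$: since $(F,G) = (\mathrm{e}^\chi, \mathrm{i}\mathrm{e}^{-\chi})$ is embedded in a generating sequence whose successor $(F_1,G_1)$ has characteristic coefficients such that $(F_1,G_1)$-pseudoanalyticity is exactly $V_1 w = 0$ (as remarked right after equation~(\ref{(F,G)})), Theorem~\ref{ThBersDer} applied to the $(F,G)$-pseudoanalytic function $W$ guarantees that $\overset{\circ}{W}$ is $(F_1,G_1)$-pseudoanalytic, i.e.\ $V_1\overset{\circ}{W} = 0$. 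Hence
\[
H^{(1)}\Phi = H^{(1)}P\overset{\circ}{W} = -4P\overline{V}(V_1\overset{\circ}{W}) = 0.
\]
This step consumes an extra derivative because $\overset{\circ}{W}$ already involves one derivative of $W$, and then $H^{(1)}$ takes two more — hence the $C^3$ hypothesis in the statement.

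There is no real obstacle: once the operator identities of the previous theorem are in hand and the Bers-derivative result is invoked to turn $VW = 0$ into $V_1\overset{\circ}{W}=0$, both assertions reduce to a single-line composition. The only mildly delicate point is making sure one picks, for each of $H$ and $H^{(1)}$, the factorization whose rightmost factor is the operator that actually annihilates the function being fed in.
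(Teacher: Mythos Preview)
Your proof is correct and follows essentially the same approach as the paper's: both arguments rest on the factorizations~(\ref{HP=4PVV}) together with Theorem~\ref{ThBersDer} to obtain $V_1\overset{\circ}{W}=0$. The only cosmetic difference is in the first claim: you apply $HP=-4P\overline{V}_1V$ directly to $VW=0$, whereas the paper first notes $V_1\overline{V}W=V_1\overset{\circ}{W}=0$ and then uses the other factorization $HP=-4PV_1\overline{V}$; both routes are equally short.
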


\begin{proof}
Since $W$ satisfies the main Vekua equation $VW=0$ we know that $\overset{\circ}{W}$ exists and is given by $\overset{\circ}{W}=\overline VW$. Moreover, we have that $V_1\overset{\circ}{W}=0$ such that
$$
0=V_1\overline V W \quad \Rightarrow \quad 0=PV_1\overline V W=-\frac{1}{4}HPW,
$$
where the last equality comes from (\ref{HP=4PVV}). Also, using again (\ref{HP=4PVV}) we easily find that
$$
H^{(1)}P\overset{\circ}{W}=-4P\overline V V_1\overset{\circ}{W}=0.
$$
\end{proof}

This theorem can be summarized in the following commutative diagram:
\begin{equation}
\label{diagram}
\begin{array}{rcc}
W\in \ker V & \overset{P}{\longrightarrow} & \Psi=(\operatorname{Im}W,\ \operatorname{Re}W)^\top \in \ker H \\
\\
\overline V=\displaystyle \frac{\mathrm d_{(F,G)}}{\mathrm dz}\downarrow && \downarrow D \\
\\
\overset{\circ}{W}\in \ker V_1 & \overset{P}{\longrightarrow} & \Phi=(\operatorname{Im}\overset{\circ}{W},\ \operatorname{Re}\overset{\circ}{W})^\top \in \ker H^{(1)}.
\end{array}
\end{equation}

Note that
the operator $\partial_{\overline{z}}$ applied to a real-valued function
$\phi$ can be regarded as a kind of gradient, and if we know that
$\partial_{\overline{z}}\phi=\Phi$ in a whole complex plane or in a convex
domain, where $\Phi=\Phi_{1}+\mathrm i\Phi_{2}$ is a given complex valued function
such that its real part $\Phi_{1}$ and imaginary part $\Phi_{2}$ satisfy the
equation

\begin{equation}
\partial_{2}\Phi_{1}-\partial_{1}\Phi_{2}=0, \label{casirot}%
\end{equation}
then we can reconstruct $\phi$ up to an arbitrary real constant $c$ in the
following way%
\begin{equation}
\phi(x,y)=2\left(  \int_{a}^{x}\Phi_{1}(\xi,b)\mathrm d\xi+\int_{b}^{y}%
\Phi_{2}(a,\eta)\mathrm d\eta\right)  +c \label{Antigr}%
\end{equation}
where $(a,b)$ is an arbitrary fixed point in the domain of interest.
Note that this formula can be easily extended to any simply connected domain
by considering the integral along an arbitrary rectifiable curve $\Gamma$
leading from $(a,b)$ to $(x,y)$%
\[
\phi(x,y)=2\left(  \int_{\Gamma}\Phi_{1}\mathrm dx+\Phi_{2}\mathrm dy\right)  +c.
\]
By $\overline{A}$ we denote this integral operator:%
\[
\overline{A}[\Phi](x,y)=2\left(  \int_{\Gamma}\Phi_{1}\mathrm dx+\Phi_{2}\mathrm dy\right).
\]
Thus if $\Phi$ satisfies (\ref{casirot}), there exists a family of real valued
functions $\phi$ such that $\partial_{\overline{z}}\phi=\Phi$, given by the
formula $\phi=\overline{A}[\Phi]$.

In a similar way we introduce the integral operator
$$
A[\Phi](x,y)=2\left(  \int_{\Gamma}\Phi_{1}\mathrm dx-\Phi_{2}\mathrm dy\right),
$$
corresponding to the operator $\partial_z$ and applied to complex functions whose real and imaginary parts satisfy the condition
$$
\partial_{2}\Phi_{1}+\partial_{1}\Phi_{2}=0.
$$

\begin{theorem}
\label{PrTransform}\cite{Krpseudoan} Let $W_{1}\in \ker H^{(2)}$ in a simply connected domain $\Omega$. Then the real valued
function $W_{2}\in \ker H^{(0)}$ such that $W=W_{1}+\mathrm i W_{2}$ is a
solution of (\ref{mainvekua}), is constructed according to the formula%
\begin{equation}
W_{2}=\mathrm e^{-\chi}\overline{A}[\mathrm i\mathrm e^{2\chi}\partial_{\overline{z}}(\mathrm e^{-\chi}W_{1})].
\label{transfDarboux}%
\end{equation}

Given a function $W_{2}\in \ker H^{(0)}$, the corresponding function $W_{1}\in \ker H^{(2)}$ such that $W=W_{1}+\mathrm i W_{2}$ is a solution of
(\ref{mainvekua}), is constructed as follows%
\begin{equation}
W_{1}=-\mathrm e^{\chi}\overline{A}[\mathrm i\mathrm e^{-2\chi}\partial_{\overline{z}}(\mathrm e^{\chi}W_{2})].
\label{transfDarbouxinv}%
\end{equation}
\end{theorem}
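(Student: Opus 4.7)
The plan is to recognize that formula (\ref{transfDarboux}) is nothing but the $\overline{A}$-antiderivative of the identity which the main Vekua equation imposes on $W_2$ once $W_1$ is prescribed, and then to show that the hypothesis $W_1\in\ker H^{(2)}$ is precisely the integrability condition (\ref{casirot}) which makes this antiderivative well defined. I first translate (\ref{mainvekua}) using the generating pair $(F,G) = (\mathrm e^{\chi},\mathrm i\mathrm e^{-\chi})$: writing $W = W_1 + \mathrm i W_2 = \phi F + \psi G$ with real $\phi = \mathrm e^{-\chi}W_1$ and $\psi = \mathrm e^{\chi}W_2$, the form $\phi_{\overline z}F + \psi_{\overline z}G = 0$ recalled in the remark after (\ref{vekua}) rearranges to
\begin{equation*}
\partial_{\overline z}\bigl(\mathrm e^{\chi}W_2\bigr) \;=\; \mathrm i\,\mathrm e^{2\chi}\,\partial_{\overline z}\bigl(\mathrm e^{-\chi}W_1\bigr),
\end{equation*}
which identifies $\Phi := \mathrm i\mathrm e^{2\chi}\partial_{\overline z}(\mathrm e^{-\chi}W_1)$ as the quantity whose real antiderivative must be $\mathrm e^{\chi}W_2$.

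The heart of the argument is to verify the integrability of $\Phi$. Splitting $\Phi = \Phi_1 + \mathrm i\Phi_2$ gives $\Phi_1 = -\tfrac{1}{2}\mathrm e^{2\chi}\partial_2(\mathrm e^{-\chi}W_1)$ and $\Phi_2 = \tfrac{1}{2}\mathrm e^{2\chi}\partial_1(\mathrm e^{-\chi}W_1)$, so that
\begin{equation*}
\partial_2\Phi_1 - \partial_1\Phi_2 \;=\; -\tfrac{1}{2}\operatorname{div}\!\bigl(\mathrm e^{2\chi}\nabla(\mathrm e^{-\chi}W_1)\bigr).
\end{equation*}
The bridge to the hypothesis is the divergence-form identity
\begin{equation*}
H^{(2)}W_1 \;=\; -\mathrm e^{-\chi}\operatorname{div}\!\bigl(\mathrm e^{2\chi}\nabla(\mathrm e^{-\chi}W_1)\bigr),
\end{equation*}
which follows from the explicit formula for $U^{(2)}$ in (\ref{superHamiltonians}) by substituting $W_1 = \mathrm e^{\chi}v$ and expanding $\triangle(\mathrm e^{\chi}v)$. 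This rewriting of $H^{(2)}$ as a weighted Laplacian is what I expect to be the main conceptual obstacle, though the algebra itself is routine; with it in hand, (\ref{casirot}) is satisfied exactly because $H^{(2)}W_1 = 0$.

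Once integrability is secured, (\ref{Antigr}) delivers a real-valued $u = \overline{A}[\Phi]$ with $\partial_{\overline z}u = \Phi$, so $W_2 := \mathrm e^{-\chi}\overline{A}[\Phi]$ is real and by construction satisfies $\partial_{\overline z}(\mathrm e^{\chi}W_2) = \mathrm i\mathrm e^{2\chi}\partial_{\overline z}(\mathrm e^{-\chi}W_1)$, which is (\ref{mainvekua}) for $W = W_1 + \mathrm i W_2$. Theorem \ref{solutionVekuaSusy} then yields $\Psi = PW = (W_2,W_1)^{\top}\in\ker H$, hence $W_2\in\ker H^{(0)}$, as claimed. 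For (\ref{transfDarbouxinv}), I invoke the symmetry of the main Vekua equation under $W\mapsto\mathrm i W$, $\chi\mapsto-\chi$: this transformation preserves (\ref{mainvekua}) while interchanging $H^{(0)}\leftrightarrow H^{(2)}$ and sending $(W_1,W_2)\mapsto(-W_2,W_1)$, so applying (\ref{transfDarboux}) in the transformed variables and substituting back produces (\ref{transfDarbouxinv}) with the correct overall sign.
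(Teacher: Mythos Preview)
Your argument is correct. The paper itself does not prove this theorem; it is quoted with attribution to \cite{Krpseudoan}, so there is no ``paper's own proof'' to compare against. Your derivation---rewriting the Vekua equation as $\partial_{\overline z}(\mathrm e^{\chi}W_2)=\mathrm i\mathrm e^{2\chi}\partial_{\overline z}(\mathrm e^{-\chi}W_1)$, identifying the integrability condition (\ref{casirot}) for $\Phi$ with the divergence-form expression $-\tfrac12\operatorname{div}(\mathrm e^{2\chi}\nabla(\mathrm e^{-\chi}W_1))=\tfrac12\mathrm e^{\chi}H^{(2)}W_1$, and then invoking $\overline A$---is exactly the standard route in the Kravchenko reference, and your symmetry argument $W\mapsto\mathrm i W$, $\chi\mapsto-\chi$ for (\ref{transfDarbouxinv}) is clean and accurate.

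Two minor remarks. First, $\overline A$ is defined only up to an additive real constant, so (\ref{transfDarboux}) really produces a one-parameter family $W_2+c\,\mathrm e^{-\chi}$; this is harmless since $\mathrm e^{-\chi}=G/\mathrm i$ lies in $\ker H^{(0)}$ and $W+cG$ remains a Vekua solution, but it is worth stating. Second, your appeal to Theorem~\ref{solutionVekuaSusy} for $W_2\in\ker H^{(0)}$ is legitimate (that theorem precedes the present one and does not depend on it), though you could equally well repeat the divergence-form computation with the roles of $\chi$ and $-\chi$ swapped and avoid the $C^3$ hypothesis entirely.
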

\begin{remark}
Note that when $\chi\equiv 0$ equations (\ref{transfDarboux}) and (\ref{transfDarbouxinv}) are the well-known formulas in complex analysis for constructing conjugate harmonic functions.
\end{remark}


 The following statements are direct corollaries of theorems \ref{solutionVekuaSusy}, \ref{PrTransform} and of the convergence theorems \ref{Expansion} and \ref{Runge} from Bers \cite{Berskniga, BersApprox}. We suppose that $\Omega$ is a bounded simply connected domain where the generating pair (\ref{(F,G)}) is complete and normalized.

\begin{theorem}
Any real-valued continuously differentiable wave function $\Psi=(\psi^{(0)},\psi^{(2)})^\top \in \ker H$ defined for $|z-z_0|<R$ admits a unique expansion of the form
\begin{equation}
\label{psiexpansion}
\Psi=\left(
\begin{array}{c}
\psi^{(0)}\\
\psi^{(2)}
\end{array}
\right)=
\displaystyle \sum_{n=0}^\infty
\left(
\begin{array}{c}
\operatorname{Im} Z^{(n)}(a^{(0)}_n,z_0;z)\\
\operatorname{Re} Z^{(n)}(a^{(2)}_n,z_0;z)
\end{array}
\right),
\end{equation}
$$
\left(
\begin{array}{c}
a^{(0)}_0 \\
a^{(2)}_0
\end{array}
\right)=
\left(
\begin{array}{c}
W^{(0)}(z_0) \\
W^{(2)}(z_0)
\end{array}
\right)\quad \text{ and }\quad \left(
\begin{array}{c}
a^{(0)}_n \\
a^{(2)}_n
\end{array}
\right)=\frac{1}{n!}\frac{\mathrm d_{(F_{n-1},G_{n-1})}}{\mathrm dz}
\left(
\begin{array}{c}
W^{(0)}(z_0) \\
W^{(2)}(z_0)
\end{array}
\right),
$$
where $W^{(0)}=f^{(0)}+\mathrm i \psi^{(0)}$, $f^{(0)}=-\mathrm e^{\chi}\overline{A}[\mathrm i\mathrm e^{-2\chi}\partial_{\overline{z}}(\mathrm e^{\chi}\psi^{(0)})]$ and $W^{(2)}=\psi^{(2)}+\mathrm i f^{(2)}$, $f^{(2)}=\mathrm e^{-\chi}\overline{A}[\mathrm i\mathrm e^{2\chi}\partial_{\overline{z}}(\mathrm e^{-\chi}\psi^{(2)})]$.

Expansion (\ref{psiexpansion}) converges normally for $|z-z_0|<R$.
\end{theorem}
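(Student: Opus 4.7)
The plan is to reduce the expansion of $\Psi$ to two independent applications of the Bers Expansion Theorem~\ref{Expansion} for the main Vekua equation~\eqref{mainvekua}, one for each scalar component.

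Since $H=\mathrm{diag}(H^{(0)},H^{(2)})$, the hypothesis $\Psi\in\ker H$ decouples into $\psi^{(0)}\in\ker H^{(0)}$ and $\psi^{(2)}\in\ker H^{(2)}$. I would first use Theorem~\ref{PrTransform} to construct a pseudoanalytic conjugate for each scalar solution: applying~\eqref{transfDarbouxinv} to $\psi^{(0)}$ yields a real-valued $f^{(0)}$ such that $W^{(0)}:=f^{(0)}+\mathrm i\psi^{(0)}$ solves \eqref{mainvekua}, and applying~\eqref{transfDarboux} to $\psi^{(2)}$ yields $f^{(2)}$ such that $W^{(2)}:=\psi^{(2)}+\mathrm i f^{(2)}$ does as well. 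The integrability condition~\eqref{casirot} needed for $\overline{A}$ to be well defined is exactly what the Schr\"odinger equations $H^{(0)}\psi^{(0)}=0$ and $H^{(2)}\psi^{(2)}=0$ provide, and the $C^{1}$ hypothesis on $\Psi$ is what makes $\partial_{\overline z}$ inside~\eqref{transfDarboux}--\eqref{transfDarbouxinv} act classically.

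Next, I would apply the Expansion Theorem~\ref{Expansion} to each of the $(F,G)$-pseudoanalytic functions $W^{(0)}$ and $W^{(2)}$ on the disk $|z-z_{0}|<R$, obtaining unique normally convergent Bers--Taylor series
$$
W^{(j)}(z)=\sum_{n=0}^{\infty}Z^{(n)}\bigl(a^{(j)}_{n},z_{0};z\bigr),\qquad a^{(j)}_{n}=\frac{(W^{(j)})^{[n]}(z_{0})}{n!},\qquad j\in\{0,2\}.
$$
Taking the imaginary part of the $j=0$ expansion and the real part of the $j=2$ expansion reproduces the two components of $\Psi$ respectively, producing exactly the vector series~\eqref{psiexpansion}; normal convergence of the vector series follows at once from normal convergence of each scalar series, since $\mathbb R^{2}$ carries the product topology.

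The delicate step is uniqueness. The conjugates $f^{(0)}$ and $f^{(2)}$ are determined by $\psi^{(0)}$ and $\psi^{(2)}$ only up to additive real constants inherent in the antigradient $\overline A$, so two admissible reconstructions of $W^{(0)},W^{(2)}$ may differ by constants that shift $a^{(0)}_{0}$ and $a^{(2)}_{0}$; one must verify that these ambiguities are absorbed in the $Z^{(0)}$-term after projection onto imaginary/real parts, so that the uniqueness clause of Theorem~\ref{Expansion} pins down the remaining $a^{(j)}_{n}$. A secondary technical point is the radius: the cited Bers statement in general only delivers normal convergence on $|z-z_{0}|<\theta R$ for a constant $\theta$ depending on the generating sequence, and promoting this to $|z-z_{0}|<R$ has to rest on the completeness and normalization of the generating pair~\eqref{(F,G)} imposed immediately before the statement.
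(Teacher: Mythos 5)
Your proposal is correct and follows essentially the same route as the paper: decouple $\ker H$ into $\ker H^{(0)}$ and $\ker H^{(2)}$, build the pseudoanalytic completions $W^{(0)},W^{(2)}$ via Theorem~\ref{PrTransform}, apply the Expansion Theorem~\ref{Expansion} to each, and project onto imaginary/real parts. The paper settles your final point about the radius exactly as you anticipate, noting that $\theta=1$ for the complete normalized generating pair~(\ref{(F,G)}) (citing \cite{BersApprox}); your remark on the additive-constant ambiguity in $\overline{A}$ is a reasonable extra precision that the paper leaves implicit.
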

\begin{proof}
Let us first consider the component $\psi^{(0)}$ of $\Psi$. From a continuously differentiable function $\psi^{(0)}$ in the kernel of $H^{(0)}$ we construct a continuously differentiable solution $~{W^{(0)}=f^{(0)}+\mathrm i \psi^{(0)}}$ of the Vekua equation (\ref{mainvekua}), where $f^{(0)}$ is given according to equation (\ref{transfDarbouxinv}) of theorem \ref{PrTransform}. Then from the expansion theorem \ref{Expansion} we obtain  $~{W^{(0)}=\sum_{n=0}^\infty Z^{(n)}(a^{(0)}_n,z_0;z)}$ and first component of the vector in equation (\ref{psiexpansion}) follows directly. Note that in theorem \ref{Expansion} we have $\theta=1$ for the complete and normalized generating pair (\ref{(F,G)}) of the considered domain $\Omega$ (see \cite{BersApprox} for details).

The second component $\psi^{(2)}$ of $\Psi$ is calculated in a similar way.
\end{proof}

\begin{theorem}\cite{KrRecentDevelopments}
In a bounded simply connected domain $\Omega$ such that $\chi\in C^1(\overline \Omega)$ the sets of functions
\begin{equation}
\label{completeH0}
\Big\{\operatorname{Im } Z^{(n)}(1,z_0;z),\ \operatorname{Im } Z^{(n)}(\mathrm i,z_0;z)\Big\}\Big|_{n=0}^{\infty}
\end{equation}
and
\begin{equation}
\label{completeH1}
\Big\{\operatorname{Re } Z^{(n)}(1,z_0;z),\ \operatorname{Re } Z^{(n)}(\mathrm i,z_0;z)\Big\}\Big|_{n=0}^{\infty}
\end{equation}
represent complete systems of solutions for the continuously differentiable functions of the kernels of $H^{(0)}$ and $H^{(1)}$, respectively.
\end{theorem}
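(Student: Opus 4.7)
The plan is a three-step template applied to each half: (i) lift the target wave function to a solution of a (main or successor) Vekua equation via Theorem~\ref{PrTransform}; (ii) approximate the lift by formal polynomials using Runge's Theorem~\ref{Runge}; (iii) project back with the $\operatorname{Im}$ or $\operatorname{Re}$ map, exploiting property~2 of Definition~\ref{DefFormalPower} to decompose each formal power as a real linear combination of $Z^{(n)}(1,z_0;z)$ and $Z^{(n)}(\mathrm i,z_0;z)$.

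For (\ref{completeH0}) and $\ker H^{(0)}$: given a continuously differentiable $\psi^{(0)}\in\ker H^{(0)}$, the conjugate formula (\ref{transfDarbouxinv}) of Theorem~\ref{PrTransform} (available because $\chi\in C^{1}(\overline\Omega)$) produces a real $f^{(0)}$ so that $W:=f^{(0)}+\mathrm i\psi^{(0)}$ solves the main Vekua equation (\ref{mainvekua}). Since (\ref{(F,G)}) is complete and normalized, Theorem~\ref{Runge} yields normally convergent formal polynomials $P_k=\sum_{n=0}^{N_k}\bigl(\alpha_{n}^{(k)}Z^{(n)}(1,z_0;z)+\beta_{n}^{(k)}Z^{(n)}(\mathrm i,z_0;z)\bigr)\to W$; since $\operatorname{Im}$ is real-linear and preserves normal convergence, $\operatorname{Im}P_k\to\psi^{(0)}$ normally on $\Omega$, which is exactly completeness of (\ref{completeH0}).

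For (\ref{completeH1}) and $\ker H^{(1)}$ I would run the template one level higher. By Theorem~\ref{solutionVekuaSusy} and diagram (\ref{diagram}), every $\Phi=(\phi_1,\phi_2)^{\top}\in\ker H^{(1)}$ equals $P\overset{\circ}{W}$ for some $W\in\ker V$; equivalently $\phi_2+\mathrm i\phi_1=\overset{\circ}{W}$ is $(F_1,G_1)$-pseudoanalytic. Starting from a $C^{1}$ vector $\Phi\in\ker H^{(1)}$ I would reconstruct such a $W$ by the $(F,G)$-antiderivative formula (\ref{FGAnt}) applied to $\phi_2+\mathrm i\phi_1$, which is path-independent on the simply connected $\Omega$. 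Approximating $W$ by formal polynomials $P_k$ via Theorem~\ref{Runge} and differentiating using property~3 of Definition~\ref{DefFormalPower} produces $\overset{\circ}{P_k}=\sum_{n}n\bigl(\alpha_{n}Z_1^{(n-1)}(1,z_0;z)+\beta_{n}Z_1^{(n-1)}(\mathrm i,z_0;z)\bigr)\to\overset{\circ}{W}$ normally; interpreting the $Z^{(n)}$ of (\ref{completeH1}) as these successor formal powers (the only reading compatible with $\operatorname{Re}$-projection landing in $\ker H^{(1)}$ via (\ref{diagram}) and the Darboux factorization $H^{(1)}=DD^{\dagger}$ of (\ref{DDH})), the real and imaginary parts converge normally to $\phi_2$ and $\phi_1$.

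The main obstacle is upgrading normal convergence $P_k\to W$ to normal convergence of the $(F,G)$-derivatives $\overset{\circ}{P_k}\to\overset{\circ}{W}$, which does not follow from Theorem~\ref{Runge} alone. The standard remedy is the pseudoanalytic analogue of Weierstrass' theorem: each $P_k$ solves (\ref{mainvekua}), and elliptic regularity for the Vekua equation with coefficient $b=\partial_{\overline z}\chi\in C(\overline\Omega)$ yields uniform control of $\partial_z P_k$ on compacts, hence normal convergence of $\overset{\circ}{P_k}=\partial_z P_k-(\partial_z\chi)\overline{P_k}$ to $\overset{\circ}{W}$. Once this upgrade is in hand, the three-step template closes the argument for both halves.
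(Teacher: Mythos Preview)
The paper does not actually prove this theorem; it is quoted from \cite{KrRecentDevelopments}. Your three-step template for (\ref{completeH0}) is exactly the standard argument and mirrors the proof the paper \emph{does} give for the preceding expansion theorem: lift $\psi^{(0)}\in\ker H^{(0)}$ to a solution of (\ref{mainvekua}) via (\ref{transfDarbouxinv}), invoke Runge (Theorem~\ref{Runge}), and project with $\operatorname{Im}$.

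For (\ref{completeH1}) you have been led astray by what is almost certainly a typographical slip in the statement: ``$H^{(1)}$'' should read ``$H^{(2)}$''. The evidence is overwhelming. Diagram (\ref{diagram}) shows that for $W\in\ker V$ one has $\operatorname{Re}W\in\ker H^{(2)}$, not $\ker H^{(1)}$; the proof of the preceding theorem treats $\psi^{(0)}$ and $\psi^{(2)}$ symmetrically; the system (\ref{completeH1}) consists of \emph{scalar} functions, whereas $\ker H^{(1)}$ consists of two-component vectors; and the Remark following the theorem paraphrases the result in a way that only makes sense for scalar Schr\"odinger operators. With the corrected reading, the second half is the mirror of the first: lift $\psi^{(2)}\in\ker H^{(2)}$ to a solution of (\ref{mainvekua}) via (\ref{transfDarboux}), apply Runge, and take $\operatorname{Re}$. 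No derivative-convergence upgrade is needed.

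Your attempt to salvage the literal $H^{(1)}$ reading has a gap earlier than the one you flag. You assert that every $\Phi\in\ker H^{(1)}$ is $P\overset{\circ}{W}$ for some $W\in\ker V$, equivalently that $w:=\phi_2+\mathrm i\phi_1$ is $(F_1,G_1)$-pseudoanalytic. But (\ref{HP=4PVV}) gives $H^{(1)}P=-4P\overline V V_1$, and since $P$ is injective, $H^{(1)}\Phi=0$ only yields $\overline V V_1 w=0$, not $V_1 w=0$; the operator $\overline V$ has a large kernel (it is itself a Vekua-type operator). Without $V_1 w=0$ the $(F,G)$-antiderivative (\ref{FGAnt}) is not path-independent and the lifting step fails. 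So even granting the Weierstrass-type upgrade you propose, the argument would not close for the matrix Hamiltonian $H^{(1)}$.
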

\begin{remark} This theorem means that any continuously differentiable element in the kernel of $H^{(0)}$ ($H^{(1)}$) can be represented by a normally convergent sequence of formal polynomials formed by imaginary (real) parts of the functions $Z^{(n)}(1,z_0;z)$ and $Z^{(n)}(\mathrm i,z_0;z)$ in any bounded simply connected domain $\Omega$.
\end{remark}


\section{Separation of variables and transmutation \\ operators}

In pseudoanalytic function theory an important problem is to find the generating sequence (\ref{DefSeq}) such that a generating pair $(F,G)$ of a given Vekua equation be embedded and then use this generating sequence to explicitly construct formal powers (\ref{DefFormalPower}) of the Vekua equation. For Vekua equations in the form of the main Vekua equation (\ref{mainvekua}) some results have been achieved by Bers \cite{Berskniga} and, more recently, by Kravchenko \cite{APFT}. In what follows we present some of these results.

Let us first introduce orthogonal coordinate systems in a plane defined (see \cite{Madelung}) from Cartesian coordinates $x,y$ by means of the relation
$$
u+\mathrm{i}v=\Phi,
$$
where $\Phi=\Phi(x+\mathrm i y)$ is an arbitrary complex analytic function, $u=u(x,y)$ and $v=v(x,y)$. Quite often a transition to more general coordinates is used
$$
\xi=\xi(u) \quad \text{ and }\quad \eta=\eta(v).
$$

\begin{example}
Polar coordinates can be defined as
$$
u+\mathrm i v=\ln(x+\mathrm i y), \quad u=\ln\sqrt{x^2+y^2},\quad v=\arctan\frac{y}{x}.
$$
More frequently polar coordinates are introduced as
$$
r=\mathrm e^u=\sqrt{x^2+y^2},\quad \varphi=v=\arctan\frac{y}{x}.
$$

We can also define parabolic coordinates as
$$
u+\mathrm i v=\sqrt{2(x+\mathrm i y)},\quad u=\sqrt{r+x},\quad v=\sqrt{r-x}.
$$
Usually the following new coordinates are introduced:
$$
\xi=u^2,\quad \eta=v^2.
$$

Other orthogonal coordinate systems (elliptic, bipolar) can be introduced in a similar way (see \cite{APFT, Madelung}).
\end{example}

\begin{theorem}\cite{{APFT},{KrRecentDevelopments}}
Let $F=\mathrm e^{\chi}$ and $G=\mathrm i\mathrm e^{-\chi}$ where $\chi=\chi_1(u)+\chi_2(v)$, $\chi_1,\chi_2$ are arbitrary differentiable real-valued functions, $\Phi=u+\mathrm i v$ is an analytic function of the variable $z=x_1+\mathrm i x_2$ in $\Omega$ such that $\partial_z \Phi$ is bounded and has no zeros in $\Omega$. Then the generating pair $(F,\ G)$ is embedded in $\{(F_m,\ G_m)\}$, a generating sequence in $\Omega$ given by
$$
(F_m,\ G_m)=\left\{
\begin{array}{ll}
\Big((\partial_z \Phi)^m \mathrm e^{\chi_1+\chi_2},\ (\partial_z \Phi)^m \mathrm i\mathrm e^{-(\chi_1+\chi_2)}\Big),& m\text{ even}, \\*[2ex]
\Big((\partial_z \Phi)^m \mathrm e^{-\chi_1+\chi_2},\ (\partial_z \Phi)^m \mathrm i\mathrm e^{\chi_1-\chi_2}\Big),& m\text{ odd},
\end{array}
\right.
$$
for all $m\in\mathbb{Z}$.
\end{theorem}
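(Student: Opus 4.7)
The plan is to verify directly the two defining conditions of a generating sequence (Definition~\ref{DefSeq}): that $(F_0,G_0)=(F,G)$ and that each $(F_{m+1},G_{m+1})$ is a successor of $(F_m,G_m)$ in the sense of~(\ref{coeffderiv}). The first is automatic by setting $m=0$ in the even formula, since $\chi=\chi_1+\chi_2$. The second requires computing the characteristic coefficients $a_m,b_m,A_m,B_m$ from Theorem~\ref{FGderivativetheo} and checking the identities $a_{m+1}=a_m$ together with $b_{m+1}=-B_m$.

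I would exploit the uniform shape $(F_m,G_m)=(P^{m}\mathrm e^{\alpha_m},\,\mathrm{i}P^{m}\mathrm e^{-\alpha_m})$ common to both parity cases, where $P:=\partial_z\Phi$, with $\alpha_m=\chi_1(u)+\chi_2(v)$ when $m$ is even and $\alpha_m=-\chi_1(u)+\chi_2(v)$ when $m$ is odd. Since $\Phi$ is analytic with $P\neq 0$ in $\Omega$, the power $P^m$ is well-defined for every $m\in\mathbb{Z}$ and satisfies $\partial_{\overline z}P^m=0$; a short direct computation then gives $F_m\overline{G_m}=-\mathrm{i}|P|^{2m}$, so that $\operatorname{Im}(\overline{F_m}G_m)=|P|^{2m}>0$ and each pair really is a generating pair. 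Because $\partial_{\overline z}$ hits only the exponential, $\partial_{\overline z}F_m=F_m\,\partial_{\overline z}\alpha_m$ and $\partial_{\overline z}G_m=-G_m\,\partial_{\overline z}\alpha_m$; together with the identity $\overline{F_m}G_m+F_m\overline{G_m}=0$ (which holds since $F_m\overline{G_m}$ is purely imaginary), the defining formulas of Theorem~\ref{FGderivativetheo} collapse to
\begin{equation*}
a_m=0,\qquad b_m=\frac{P^{2m}}{|P|^{2m}}\,\partial_{\overline z}\alpha_m,\qquad B_m=\frac{P^{2m}}{|P|^{2m}}\,\partial_z\alpha_m.
\end{equation*}

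From $a_m\equiv 0$ the first successor condition is immediate. For $b_{m+1}=-B_m$ I would invoke the Cauchy--Riemann relations forced by the analyticity of $\Phi=u+\mathrm{i}v$---namely $\partial_z u=P/2$, $\partial_z v=-\mathrm{i}P/2$ and their complex conjugates---to rewrite $\partial_z\alpha_m$ and $\partial_{\overline z}\alpha_{m+1}$ as explicit $P$- and $\overline P$-multiples of $\chi_1'(u)$ and $\chi_2'(v)$. Using $|P|^{2m+2}=P\overline P\,|P|^{2m}$, both $-B_m$ and $b_{m+1}$ reduce to the same expression: the sign flip $\chi_1\mapsto-\chi_1$ encoded in the parity alternation of $\alpha_m$ compensates exactly the minus in $-B_m$ and the extra factor $\overline P/P$ that arises when passing from a $\partial_z$- to a $\partial_{\overline z}$-derivative. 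The main obstacle is thus purely bookkeeping---tracking parities and signs across the even/odd cases---and once the Cauchy--Riemann relations for $\Phi$ are written down, every step becomes algebraically routine.
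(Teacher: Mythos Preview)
Your verification is correct: the computation of the characteristic coefficients $a_m=0$, $b_m=(P^{2m}/|P|^{2m})\,\partial_{\overline z}\alpha_m$, $B_m=(P^{2m}/|P|^{2m})\,\partial_z\alpha_m$ is accurate, and the key identity $\partial_{\overline z}\alpha_{m+1}=-(\overline P/P)\,\partial_z\alpha_m$ you derive from the Cauchy--Riemann relations for $\Phi$ does indeed force $b_{m+1}=-B_m$ after cancelling the extra factor $P\overline P=|P|^2$. Note, however, that the paper itself does not supply a proof of this theorem---it is quoted as a known result from the references \cite{APFT,KrRecentDevelopments}---so there is no in-paper argument to compare against; your direct computational approach is the standard one and matches what those references do.
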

This theorem opens the way for explicit construction of formal powers corresponding to the main Vekua equation (\ref{mainvekua}) when $\chi$ is separable in terms of orthogonal coordinates $u(x,y)$ and $v(x,y)$ in the plane.

For the Cartesian coordinates Bers obtained elegant explicit formulas for the formal powers \cite{Berskniga}. In this case the separable form of $\chi=\chi_1(x)+\chi_2(y)$ allows us to write down a generating pair $(F,G)=(\mathrm e^{\chi_1+\chi_2},\mathrm i \mathrm e^{-(\chi_1+\chi_2)})$ as well as the generating sequence of period two (except when  $\chi_1(x)\equiv 0$, then the period is one) embedding this generating pair
\begin{equation}
\label{FGF1G1}
\begin{array}{ll}
(F,G)=\big(\mathrm e^{\chi_1+\chi_2},\ \mathrm i\mathrm e^{-(\chi_1+\chi_2)}\big),&  (F_1,G_1)=\big(\mathrm e^{-\chi_1+\chi_2},\ \mathrm i\mathrm e^{\chi_1-\chi_2}\big), \\*[2ex]
(F_2,G_2)=(F,G),&  (F_3,G_3)=(F_1,G_1),\ldots
\end{array}
\end{equation}
For simplicity we assume that $\chi_1(0)=\chi_2(0)=0$. In this case the formal powers are
constructed in an elegant manner as follows. We define the auxiliary functions

\begin{equation}
X^{(0)}(x)\equiv \widetilde{X}^{(0)}(x)\equiv Y^{(0)}(y)\equiv \widetilde{Y}^{(0)}(y)\equiv 1, \label{X1}%
\end{equation}
with
\begin{equation}
X^{(n)}(x)=n%
{\displaystyle\int\limits_{x_{0}}^{x}}
X^{(n-1)}(s)\exp\big[(-1)^{n}2\chi_1(s)\big]\,\mathrm{d}s, \label{X3}%
\end{equation}
\begin{equation}
\widetilde{X}^{(n)}(x)=n%
{\displaystyle\int\limits_{x_{0}}^{x}}
\widetilde{X}^{(n-1)}(s)\exp\big[(-1)^{n+1}2\chi_1(s)\big]\,\mathrm{d}s,
\label{X2}%
\end{equation}
\begin{equation}
Y^{(n)}(y)=n%
{\displaystyle\int\limits_{y_{0}}^{y}}
Y^{(n-1)}(s)\exp\big[(-1)^{n}2\chi_2(s)\big]\,\mathrm{d}s, \label{T3}%
\end{equation}
\begin{equation}
\widetilde{Y}^{(n)}(y)=n%
{\displaystyle\int\limits_{y_{0}}^{y}}
\widetilde{Y}^{(n-1)}(s)\exp\big[(-1)^{n+1}2\chi_2(s)\big]\,\mathrm{d}s,
\label{T2}%
\end{equation}%
where $z_0=(x_0,y_0)$ is an arbitrary fixed point in $\Omega$. Then for $a=a_1+\mathrm{i}a_2$ we have
\begin{equation}
\label{Zn}
Z^{(n)}(a,z_0;z)=\mathrm e^{\chi_1+\chi_2}\,\mathrm{Re}\ _*Z^{(n)}(a,z_0;z)+\mathrm i\mathrm e^{-(\chi_1+\chi_2)}\,\mathrm{Im}\ _*Z^{(n)}(a,z_0;z),
\end{equation}
where
$$
_*Z^{(n)}(a,z_0;z)=\left\{
\begin{tabular}[c]{ll}
$a_1\displaystyle \sum_{k=0}^n\binom{n}{k}X^{(n-k)}\mathrm{i}^k
\widetilde Y^{(k)}+\mathrm{i}a_2\displaystyle
\sum_{k=0}^n\binom{n}{k}\widetilde
X^{(n-k)}\mathrm{i}^k Y^{(k)}$, & odd $n$ \\
\\
$a_1\displaystyle \sum_{k=0}^n\binom{n}{k}\widetilde
X^{(n-k)}\mathrm{i}^k \widetilde Y^{(k)}+\mathrm{i}a_2\displaystyle
\sum_{k=0}^n\binom{n}{k}X^{(n-k)}\mathrm{i}^k Y^{(k)}$, & even $n$.
\end{tabular}
\ \ \ \ \ \ \ \ \ \right.
$$

We introduce the
infinite systems of functions
\begin{equation}
\label{systemsx}
\varphi_{k}(x)=%
\begin{cases}
\mathrm e^{\chi_1(x)}X^{(k)}(x), & k\text{\ odd}\\*[2ex]
\mathrm e^{\chi_1(x)}\widetilde{X}^{(k)}(x), & k\text{\ even}%
\end{cases},\quad
\widetilde \varphi_{k}(x)=%
\begin{cases}
\mathrm e^{-\chi_1(x)} \widetilde X^{(k)}(x), & k\text{\ odd}\\*[2ex]
\mathrm e^{-\chi_1(x)} X^{(k)}(x), & k\text{\ even}%
\end{cases}
\end{equation}
and
\begin{equation}
\label{systemsy}
\psi_{k}(y)=%
\begin{cases}
\mathrm e^{\chi_2(y)}Y^{(k)}(y), & k\text{\ odd}\\*[2ex]
\mathrm e^{\chi_2(y)}\widetilde{Y}^{(k)}(y), & k\text{\ even}%
\end{cases},\quad
\widetilde \psi_{k}(y)=%
\begin{cases}
\mathrm e^{-\chi_2(y)} \widetilde Y^{(k)}(y), & k\text{\ odd}\\*[2ex]
\mathrm e^{-\chi_2(y)} Y^{(k)}(y), & k\text{\ even},%
\end{cases}
\end{equation}
where $k\in \mathbb{Z}_{\geq 0}$.

\begin{example}
\label{ExamplePoly} Consider the case when the superpotential is identically zero, i.e. $\chi\equiv 0$. Then it is easy to see that we have $\varphi_{k}(x)=\widetilde \varphi_{k}(x)=(x-x_0)^{k}$ and $\psi_{k}(y)=\widetilde \psi_{k}(y)=(y-y_0)^{k}$ for $k\in\mathbb{Z}_{\geq 0}$.
\end{example}


Using property 2, following definition \ref{DefFormalPower}, every formal power $Z^{(n)}(a,z_0;z)$ can be expressed in terms of a linear combination of the formal powers $Z^{(n)}(1,z_0;z)$ and $Z^{(n)}(\mathrm i,z_0;z)$  which can be explicitly written in terms of the infinite systems of functions (\ref{systemsx}) and (\ref{systemsy}) as
$$
Z^{(0)}(1,z_0;z)=F=\varphi_0 \psi_0=\mathrm e^{\chi_1+\chi_2},\qquad Z^{(0)}(\mathrm i,z_0;z)=G=\mathrm i\widetilde \varphi_0 \widetilde \psi_0=\mathrm i\mathrm e^{-(\chi_1+\chi_2)},
$$
and, for $n>0$,
\small
\begin{equation}
\label{Z1}
Z^{(n)}(1,z_0;z)=\left\{
\begin{array}{ll}
\displaystyle \sum_{k=0}^{\frac{1}{2}(n-1)}(-1)^k\binom{n}{2k}\varphi_{n-2k}\psi_{2k}+\mathrm i\displaystyle \sum_{k=0}^{\frac{1}{2}(n-1)}(-1)^k\binom{n}{2k+1}\widetilde \varphi_{n-2k-1}\widetilde \psi_{2k+1},& n\mbox{ odd}\\*[2ex]
\displaystyle \sum_{k=0}^{n/2}(-1)^k\binom{n}{2k}\varphi_{n-2k} \psi_{2k}+\mathrm i\displaystyle \sum_{k=0}^{\frac{1}{2}(n-2)}(-1)^k\binom{n}{2k+1}\widetilde \varphi_{n-2k-1}\widetilde \psi_{2k+1},& n\mbox{ even}
\end{array}
\right.
\end{equation}
\begin{equation}
\label{Zi}
Z^{(n)}(\mathrm i,z_0;z)=\left\{
\begin{array}{ll}
\displaystyle \sum_{k=0}^{\frac{1}{2}(n-1)}(-1)^{k+1}\binom{n}{2k+1}\varphi_{n-2k-1} \psi_{2k+1}+\mathrm i\displaystyle \sum_{k=0}^{\frac{1}{2}(n-1)}(-1)^k\binom{n}{2k}\widetilde \varphi_{n-2k}\widetilde \psi_{2k},& n\mbox{ odd}\\*[2ex]
\displaystyle \sum_{k=0}^{\frac{1}{2}(n-2)}(-1)^{k+1}\binom{n}{2k+1}\varphi_{n-2k-1} \psi_{2k+1}+\mathrm i \displaystyle \sum_{k=0}^{n/2}(-1)^k\binom{n}{2k}\widetilde \varphi_{n-2k}\widetilde \psi_{2k},& n\mbox{ even}.
\end{array}
\right.
\end{equation}
\normalsize

The $(F,G)$-derivative of formal powers $Z^{(n)}(a,z_0;z)$ can be obtained using theorem \ref{FGderivativetheo}, i.e. $\overset{\circ}{Z}^{(n)}\!\!(a,z_0;z)=\partial_z Z^{(n)}(a,z_0;z)-(\partial_z \chi)\overline {Z^{(n)}}(a,z_0;z)$. However,  it is more efficient to remark that
$$
\overset{\circ}{Z}^{(n)}\!\!(a,z_0;z)=\frac{\mathrm d_{(F,G)}}{\mathrm dz}Z^{(n)}(a,z_0;z)=n Z^{(n-1)}_1(a,z_0;z)
$$
is an $(F_1,G_1)$-pseudoanalytic functions satisfying the Vekua equation (\ref{mainvekua}) for $\chi=-\chi_1+\chi_2$ from the generating pairs (\ref{FGF1G1}). Then formal powers $Z_1^{(n)}(a,z_0;z)$ can be constructed by considering transformations
$$
\pm\chi_1\longrightarrow \mp \chi_1 \qquad \text{ and } \qquad \pm\chi_2\longrightarrow \pm \chi_2\quad (\chi_2\text{ invariant})
$$
in the formal powers $Z^{(n)}(a,z_0;z)$. These transformations imply that
$$
\varphi_k \longrightarrow \widetilde \varphi_k, \quad \widetilde \varphi_k \longrightarrow \varphi_k, \quad \psi_k \longrightarrow \psi_k,\quad \widetilde\psi_k \longrightarrow \widetilde\psi_k\quad (\psi_k,\ \widetilde \psi_k \text{ invariants})
$$
such that
$$
Z_1^{(0)}\!\!(1,z_0;z)=F_1=\widetilde \varphi_0 \psi_0=\mathrm e^{-\chi_1+\chi_2},\quad Z_1^{(0)}\!\!(\mathrm i,z_0;z)=G_1=\mathrm i\varphi_0\widetilde \psi_0=\mathrm i\mathrm e^{\chi_1-\chi_2},
$$
and for $n>0$ we have
\small
\begin{equation}
\label{Z11}
Z_1^{(n)}(1,z_0;z)=\left\{
\begin{array}{ll}
\displaystyle \sum_{k=0}^{\frac{1}{2}(n-1)}(-1)^k\binom{n}{2k}\widetilde \varphi_{n-2k} \psi_{2k}+\mathrm i\displaystyle \sum_{k=0}^{\frac{1}{2}(n-1)}(-1)^k\binom{n}{2k+1}\varphi_{n-2k-1}\widetilde \psi_{2k+1},& n\mbox{ odd}\\*[2ex]
\displaystyle \sum_{k=0}^{n/2}(-1)^k\binom{n}{2k}\widetilde \varphi_{n-2k} \psi_{2k}+\mathrm i\displaystyle \sum_{k=0}^{\frac{1}{2}(n-2)}(-1)^k\binom{n}{2k+1}\varphi_{n-2k-1}\widetilde \psi_{2k+1},& n\mbox{ even}
\end{array}
\right.
\end{equation}
\begin{equation}
\label{Z1i}
Z_1^{(n)}(\mathrm i,z_0;z)=\left\{
\begin{array}{ll}
\displaystyle \sum_{k=0}^{\frac{1}{2}(n-1)}(-1)^{k+1}\binom{n}{2k+1}\widetilde \varphi_{n-2k-1} \psi_{2k+1}+\mathrm i\displaystyle \sum_{k=0}^{\frac{1}{2}(n-1)}(-1)^k\binom{n}{2k}\varphi_{n-2k}\widetilde \psi_{2k},& n\mbox{ odd}\\*[2ex]
\displaystyle \sum_{k=0}^{\frac{1}{2}(n-2)}(-1)^{k+1}\binom{n}{2k+1}\widetilde \varphi_{n-2k-1} \psi_{2k+1}+\mathrm i \displaystyle \sum_{k=0}^{n/2}(-1)^k\binom{n}{2k}\varphi_{n-2k}\widetilde \psi_{2k},& n\mbox{ even},
\end{array}
\right.
\end{equation}
\normalsize
where $Z_1^{(n)}(a,z_0;z)=\frac{1}{n+1}\overset{\circ}{Z}^{(n+1)}(a,z_0;z)$.

In \cite{CKT} it was shown that if the functions $\chi_j\in C^2(-a_j,a_j)\cap C^1[-a_j,a_j]$ are such that $\chi_j(0)=0$ and $\chi_j$ is bounded on $[-a_j,a_j]$, where $j\in\{1,2\}$, there exist the transmutation operators $T_1,T_2$ defined as follows
$$
T_{j}[f(x_j)]=f(x_j)+\int_{-x_j}^{x_j} \textbf{K}_j(x_j,t;h_j)\cdot f(t)\mathrm dt,\qquad j\in\{1,2\},
$$
where $h_j=\partial_j\mathrm e^{\chi_j}\big|_{x_j=0}$, the kernel $\textbf{K}_j(x_j,s;h_j)$ is given by
$$
\textbf{K}_j(x_j,t;h_j)=\frac{h_j}{2}+K_j(x_j,t)+\frac{h_j}{2}\int_{t}^{x_j} [K_j(x_j,s)-K_j(x_j,-s)]\mathrm ds
$$
and the function $K_j(x_j,t)$ is the unique solution of the Goursat problem \cite{KKTT, KT, Marchenko}
$$
\Big(\partial_j^{2}-q_j(x_j)\Big)  K_j(x_j,t)=\frac
{\partial^{2}}{\partial t^{2}}K_j(x_j,t), \quad q_j=\partial_j^2 \chi_j+(\partial_j \chi_j)^2,\label{Goursat1}%
$$
$$
K_j(x_j,x_j)=\frac{1}{2}\int_{0}^{x_j}q_j(s)\mathrm ds,\qquad K_j(x_j,-x_j)=0. \label{Goursat1}%
$$

Moreover, $T_1$ and $T_2$ satisfy the relations
\begin{equation}
\label{T1xT2y}
T_1[x^k]=\varphi_k \quad \text{ and }\quad T_2[y^k]=\psi_k,\quad k\in \mathbb{Z}_{\geq 0}.
\end{equation}

Another pair of transmutations $\widetilde T_1$ and $\widetilde T_2$ is constructed, one of the representations of which can be given by the equalities
$$
\widetilde T_j[f(x_j)]=\mathrm e^{-\chi_j(x_j)}\left(\int_0^{x_j} \mathrm e^{\chi_j(s)}T_j[\partial_j f(s)]\mathrm ds+f(0)\right),\qquad j\in\{1,2\}.
$$
These operators satisfy the equalities
\begin{equation}
\label{T1tildexT2y}
\widetilde T_1[x^k]=\widetilde \varphi_k \quad \text{ and }\quad \widetilde T_2[y^k]=\widetilde \psi_k,\quad k\in \mathbb{Z}_{\geq 0}.
\end{equation}

The operators $\widetilde T_1$ and $\widetilde T_2$ admit the representations as Volterra integral operators
$$
\widetilde T_j[f(x_j)]=f(x_j)+\int_{-x_j}^{x_j}\widetilde{\textbf{K}}_j(x_j,t;-h_j)\cdot f(t)\mathrm dt,
$$
where the kernel $\widetilde{\textbf{K}}_j(x_j,t;-h_j)$ has the form
$$
\widetilde{\textbf{K}}_j(x_j,t;-h_j)=-\mathrm e^{-\chi_j(x_j)}\left(\int_{-t}^{x_j}\frac{\partial}{\partial t}\textbf{K}_j(s,t;h_j)\cdot \mathrm e^{\chi_j(s)}\mathrm d s+\frac{h_j}{2}\mathrm e^{\chi_j(-t)}\right).
$$

\begin{corollary}
\cite{KT}
\label{partialeT}
The following four operator equalities hold on $C^1[-a_j,a_j]$-functions of the respective variable
\begin{equation}
\label{relationsTpartial}
\partial_j \mathrm e^{\chi_j}\widetilde T_j=\mathrm e^{\chi_j}T_j\partial_j\quad \text{ and } \quad \partial_j \mathrm e^{-\chi_j}T_j=\mathrm e^{-\chi_j}\widetilde T_j\partial_j,
\end{equation}
for $j\in\{1,2\}$.
\end{corollary}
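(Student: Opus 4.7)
The plan is to treat the two equalities in (\ref{relationsTpartial}) separately: the first is essentially built into the very construction of $\widetilde T_j$, while the second requires a genuine additional argument.

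For the first identity $\partial_j\,e^{\chi_j}\widetilde T_j = e^{\chi_j}T_j\partial_j$, I would multiply the defining formula for $\widetilde T_j$ through by $e^{\chi_j(x_j)}$ to rewrite it as
$$e^{\chi_j(x_j)}\,\widetilde T_j[f](x_j) \;=\; \int_{0}^{x_j} e^{\chi_j(s)}\,T_j[\partial_j f](s)\,\mathrm{d}s + f(0),$$
and differentiate both sides in $x_j$. Since $f\in C^1$ and $T_j$ preserves continuity (being a Volterra operator with continuous kernel), the integrand is continuous and the fundamental theorem of calculus yields the first equality in one line.

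For the second identity $\partial_j\,e^{-\chi_j}T_j = e^{-\chi_j}\widetilde T_j\partial_j$, I would exploit the identifications (\ref{T1xT2y}) and (\ref{T1tildexT2y}), namely $T_j[x_j^k]=\varphi_k$ and $\widetilde T_j[x_j^k]=\widetilde\varphi_k$. On $f(x_j)=x_j^k$ the claim reduces to the purely function-theoretic identity $\partial_j(e^{-\chi_j}\varphi_k)=k\,e^{-\chi_j}\widetilde\varphi_{k-1}$, which I would verify by a direct computation using the explicit formulas (\ref{systemsx}) and the recursions (\ref{X3})--(\ref{X2}). For odd $k$ one has $e^{-\chi_j}\varphi_k = X^{(k)}$, hence $\partial_j(e^{-\chi_j}\varphi_k) = k\,X^{(k-1)}e^{-2\chi_j}$, which matches $k\,e^{-\chi_j}\widetilde\varphi_{k-1} = k\,e^{-2\chi_j}X^{(k-1)}$ since $k-1$ is even; the even-$k$ case is symmetric under the interchange $X^{(\cdot)}\leftrightarrow\widetilde X^{(\cdot)}$. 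By linearity the identity thus holds on all polynomials.

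The extension from polynomials to arbitrary $f\in C^1[-a_j,a_j]$ rests on density plus continuity: polynomials are dense in $C^1[-a_j,a_j]$ (approximate $\partial_j f$ uniformly by polynomials and integrate to approximate $f$), and both sides of the claimed identity depend continuously on $f$ in the $C^1$-norm since $T_j$ and $\widetilde T_j$ are bounded Volterra operators with continuous kernels. The main technical obstacle is verifying that $T_j$ preserves $C^1$-regularity so that $f\mapsto\partial_j(e^{-\chi_j}T_jf)$ is itself continuous; this is inherited from the $C^1$-regularity of the kernel $\textbf{K}_j$, which in turn comes from the Goursat problem satisfied by $K_j$. An alternative that avoids density altogether is a direct operator calculation: substitute the integral representation of $\widetilde T_j[\partial_j f]$, integrate by parts twice using the intertwining $(\partial_j^2-q_j)T_j=T_j\partial_j^2$, and check that the boundary terms at $s=0$ cancel by virtue of the relation $h_j=\partial_j\chi_j(0)$ forced by the normalization $\chi_j(0)=0$.
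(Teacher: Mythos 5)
The paper does not prove this corollary at all --- it is imported from \cite{KT} --- so there is no internal argument to measure yours against; what matters is whether your self-contained proof stands, and it does. The first identity is, as you say, a one-line consequence of the definition: multiplying the defining formula for $\widetilde T_j$ by $\mathrm e^{\chi_j(x_j)}$ and differentiating, the fundamental theorem of calculus applies because the integrand $\mathrm e^{\chi_j(s)}T_j[\partial_j f](s)$ is continuous ($T_j$ being a Volterra operator with continuous kernel). Your parity check for the second identity is also correct: for odd $k$ one has $\mathrm e^{-\chi_1}\varphi_k=X^{(k)}$ and (\ref{X3}) gives $\partial_1 X^{(k)}=kX^{(k-1)}\mathrm e^{-2\chi_1}=k\,\mathrm e^{-\chi_1}\widetilde\varphi_{k-1}$ because $k-1$ is even, and the even case runs the same way through (\ref{X2}); so the identity holds on polynomials. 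The only load-bearing technical point is the one you flag: the density argument needs approximation in the $C^1$-norm (approximate $\partial_j f$ uniformly, then integrate) together with $C^1$-to-$C^0$ continuity of $f\mapsto\partial_j(\mathrm e^{-\chi_j}T_jf)$, which requires $\partial_{x_j}\mathbf{K}_j$ to exist and be continuous; this does follow from the regularity of the Goursat solution $K_j$ for continuous $q_j$, but it is an appeal to the transmutation literature rather than to anything in this paper. Your sketched alternative actually closes cleanly and avoids density entirely for $f\in C^2$: putting $u=T_jf$ and using $T_j[f'']=(\partial_j^2-q_j)u$ one finds $\int_0^{x}\mathrm e^{\chi_j}(u''-q_ju)\,\mathrm ds=\mathrm e^{2\chi_j(x)}(\mathrm e^{-\chi_j}u)'(x)-\big(u'(0)-h_ju(0)\big)$, and the boundary term vanishes precisely because $\mathbf{K}_j(0,0;h_j)=h_j/2$ forces $u'(0)=f'(0)+h_jf(0)$. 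A final remark: the second identity is formally the first under $\chi_j\mapsto-\chi_j$, which exchanges the roles of $T_j$ and $\widetilde T_j$ (compare (\ref{T1xT2y}) with (\ref{T1tildexT2y})); making that symmetry precise is essentially what \cite{KT} does.
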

The integral counterpart of this corollary can be considered. Indeed, by applying the operator equality (\ref{relationsTpartial}) on the antiderivative $F_j=\int f(x_1,x_2) \,\mathrm dx_j$ for $j\in \{1,2\}$ and then integrating on $x_j$, we obtain
$$
\mathrm e^{\chi_j}\widetilde T_j F_j=\int \mathrm e^{\chi_j}T_j\partial_j F_j \,\mathrm dx_j\quad \text{ and }\quad
\mathrm e^{-\chi_j} T_j F_j=\int \mathrm e^{-\chi_j}\widetilde T_j\partial_j F_j \,\mathrm dx_j,
$$
i.e.
\begin{equation}
\label{intTd1}
\mathrm e^{\chi_j}\widetilde T_j\int f\,\mathrm dx_j=\int \mathrm e^{\chi_j}T_jf \,\mathrm dx_j
\end{equation}
and
\begin{equation}
\label{intTd2}
\mathrm e^{-\chi_j} T_j\int f\,\mathrm dx_j=\int \mathrm e^{-\chi_j}\widetilde T_jf \,\mathrm dx_j,
\end{equation}
for any continuous real-valued function $f$ on $[-a_j,a_j]$ for the variable $x_j$.

We introduce the following operators
$$
\mathbf{T}_0=T_1T_2 P^++\mathrm i \widetilde T_1\widetilde T_2 P^-
$$
and
$$
\mathbf{T}_1=\widetilde T_1 T_2 P^++\mathrm i T_1 \widetilde T_2 P^-.
$$
From now on let $\Omega\subset \overline R=[-a_1,a_1]\times [-a_2,a_2]$ be a simply connected domain such that together with any point $(x,y)$ belonging to $\Omega$ the rectangle with the vertices $(x,y)$, $(-x,y)$, $(x,-y)$ and $(-x,-y)$ also belongs to $\Omega$. In such a domain application of operators $\mathbf{T}_0$ and $\mathbf{T}_1$ is meaningful.

Let us now consider two results obtained in a recent paper \cite{CKM}.
\begin{theorem}
\cite{CKM}
For any $z_0,z\in \Omega$ and $a=a_1+\mathrm i a_2\in \mathbb{Z}$ the following equalities hold
$$
\mathbf{T}_0[az^n]=Z^{(n)}(a,z_0;z)\quad \text{ and }\quad \mathbf{T}_1[az^n]=Z_1^{(n)}(a,z_0;z),
$$
where
$$
Z^{(n)}(a,z_0;z)=a_1Z^{(n)}(1,z_0;z)+a_2Z^{(n)}(\mathrm i,z_0;z)$$
$$
Z_1^{(n)}(a,z_0;z)=a_1Z_1^{(n)}(1,z_0;z)+a_2Z_1^{(n)}(\mathrm i,z_0;z)
$$
for $Z^{(n)}(1,z_0;z)$, $Z^{(n)}(\mathrm i,z_0;z)$, $Z_1^{(n)}(1,z_0;z)$ and $Z_1^{(n)}(\mathrm i,z_0;z)$ given, respectively, by equations (\ref{Z1}), (\ref{Zi}), (\ref{Z11}) and (\ref{Z1i}).
\end{theorem}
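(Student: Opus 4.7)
The plan is to reduce the claim to a direct bookkeeping computation by combining three ingredients: linearity of the transmutation operators, the binomial expansion of $z^n=(x+\mathrm iy)^n$ with respect to the real variables, and the identities (\ref{T1xT2y})--(\ref{T1tildexT2y}) that transmute monomials into the systems $\varphi_k,\widetilde\varphi_k,\psi_k,\widetilde\psi_k$.

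First I would use property~2 following Definition~\ref{DefFormalPower}, which gives $Z^{(n)}(a,z_0;z)=a_1 Z^{(n)}(1,z_0;z)+a_2 Z^{(n)}(\mathrm i,z_0;z)$, and the analogous decomposition for $Z_1^{(n)}(a,z_0;z)$. Since $P^\pm$ and $T_1,T_2,\widetilde T_1,\widetilde T_2$ are all $\mathbb R$-linear, both $\mathbf T_0$ and $\mathbf T_1$ are $\mathbb R$-linear as maps from complex-valued to complex-valued functions in the sense that $\mathbf T_k[(a_1+\mathrm ia_2)z^n]=a_1\mathbf T_k[z^n]+a_2\mathbf T_k[\mathrm iz^n]$. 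Hence it suffices to verify the two equalities for $a=1$ and $a=\mathrm i$ separately.

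Second, I would expand $z^n=\sum_{k=0}^{n}\binom{n}{k}\mathrm i^{k}x^{n-k}y^{k}$ and split the sum according to the parity of $k$, so that
\[
P^+(z^n)=\sum_{j}(-1)^j\binom{n}{2j}x^{n-2j}y^{2j},\qquad P^-(z^n)=\sum_{j}(-1)^j\binom{n}{2j+1}x^{n-2j-1}y^{2j+1},
\]
with the ranges of $j$ determined by the parity of $n$, and similarly $P^+(\mathrm iz^n)=-P^-(z^n)$, $P^-(\mathrm iz^n)=P^+(z^n)$. Because $T_1$ acts only on the variable $x_1=x$ and $T_2$ only on $x_2=y$, the composition $T_1T_2$ factors through tensor products of monomials: $T_1T_2[x^{n-2j}y^{2j}]=\varphi_{n-2j}\psi_{2j}$, and similarly $\widetilde T_1\widetilde T_2[x^{n-2j-1}y^{2j+1}]=\widetilde\varphi_{n-2j-1}\widetilde\psi_{2j+1}$, by (\ref{T1xT2y}) and (\ref{T1tildexT2y}).

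Third, I would plug these into the definition $\mathbf T_0=T_1T_2P^++\mathrm i\widetilde T_1\widetilde T_2 P^-$ to obtain
\[
\mathbf T_0[z^n]=\sum_j(-1)^j\binom{n}{2j}\varphi_{n-2j}\psi_{2j}+\mathrm i\sum_j(-1)^j\binom{n}{2j+1}\widetilde\varphi_{n-2j-1}\widetilde\psi_{2j+1},
\]
which matches (\ref{Z1}) in both the odd and even cases; the analogous computation for $\mathbf T_0[\mathrm iz^n]$ recovers (\ref{Zi}). For $\mathbf T_1=\widetilde T_1T_2P^++\mathrm iT_1\widetilde T_2P^-$ the same manipulation applies with $\varphi_k$ and $\widetilde\varphi_k$ interchanged (while $\psi_k,\widetilde\psi_k$ are untouched, since only the first transmutation pair is swapped), which is exactly the transformation that turns (\ref{Z1})--(\ref{Zi}) into (\ref{Z11})--(\ref{Z1i}). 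Combining with the linearity step completes the proof.

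The only nontrivial step is the matching of indices and signs between the binomial expansion and the explicit formulas (\ref{Z1})--(\ref{Zi}) and (\ref{Z11})--(\ref{Z1i}); I expect this to be the main obstacle only in the sense of being a careful case-by-case verification according to the parity of $n$, since the transmutation operators do all the analytic work via the already-established identities (\ref{T1xT2y}) and (\ref{T1tildexT2y}). No convergence or regularity issue arises because each $\mathbf T_k[az^n]$ is the finite sum $\mathbf T_k$ applied to a polynomial, and Corollary~\ref{partialeT} is not needed here.
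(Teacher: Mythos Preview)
Your proposal is correct. The paper does not supply its own proof of this theorem---it is quoted from \cite{CKM}---and your direct verification via the binomial expansion of $(x+\mathrm iy)^n$, the $\mathbb R$-linearity reduction to $a\in\{1,\mathrm i\}$, and the identities (\ref{T1xT2y})--(\ref{T1tildexT2y}) is precisely the natural argument; the index and sign matching with (\ref{Z1})--(\ref{Z1i}) goes through as you describe.
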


\begin{theorem}
\cite{CKM}
\label{relWw}
For any complex-valued continuously differentiable function $w$ defined in $\Omega$ the following operator equalities hold:
\begin{equation}
\label{relVekuaCR1}
V\mathbf{T}_0=\mathbf{T}_1\partial_{\overline z},\quad  V_1\mathbf{T}_1=\mathbf{T}_0\partial_{\overline z},
\end{equation}
\begin{equation}
\label{relVekuaCR2}
\frac{\mathrm d_{(F,G)}}{\mathrm dz}\mathbf T_0=\mathbf{T}_1\partial_{z},\quad  \frac{\mathrm d_{(F_1,G_1)}}{\mathrm dz}\mathbf T_1=\mathbf{T}_0\partial_{z}.
\end{equation}
\end{theorem}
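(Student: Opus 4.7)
I plan to verify each of the four identities by acting on an arbitrary $w=u+\mathrm i v\in C^1(\Omega,\mathbb C)$, with $u=P^+w$ and $v=P^-w$ real-valued, and reducing everything to the two first-order intertwining relations from Corollary~\ref{partialeT}. Rearranging~(\ref{relationsTpartial}) I first record the working form
\begin{equation*}
\partial_j T_j = (\partial_j\chi_j)\,T_j + \widetilde T_j\,\partial_j, \qquad \partial_j\widetilde T_j = -(\partial_j\chi_j)\,\widetilde T_j + T_j\,\partial_j, \qquad j\in\{1,2\}.
\end{equation*}
These are the only analytic ingredients needed. Since $T_1,\widetilde T_1$ act only in $x_1$ and $T_2,\widetilde T_2$ only in $x_2$, they commute mutually and with any function of the complementary variable; because their kernels are real-valued, they also commute with $C$ on real inputs, so $C\mathbf T_0 w = T_1T_2 u - \mathrm i\widetilde T_1\widetilde T_2 v$ and analogously for $\mathbf T_1 w$.

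For $V\mathbf T_0=\mathbf T_1\partial_{\overline z}$ I would expand
\begin{equation*}
V\mathbf T_0 w = \partial_{\overline z}(T_1T_2 u) + \mathrm i\,\partial_{\overline z}(\widetilde T_1\widetilde T_2 v) - \partial_{\overline z}\chi\,\bigl(T_1T_2 u - \mathrm i\widetilde T_1\widetilde T_2 v\bigr),
\end{equation*}
then split $\partial_{\overline z}=\tfrac12(\partial_1+\mathrm i\partial_2)$ and apply the intertwining relations to each $\partial_j T_j$ and $\partial_j\widetilde T_j$ produced. Using $\chi=\chi_1(x_1)+\chi_2(x_2)$, the zero-order contributions assemble into $\tfrac12(\partial_1\chi_1+\mathrm i\partial_2\chi_2)T_1T_2 u=\partial_{\overline z}\chi\cdot T_1T_2 u$ and the negative of the analogous expression for $\widetilde T_1\widetilde T_2 v$, which is precisely what cancels the $-\partial_{\overline z}\chi\cdot C$ piece of $V$. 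The surviving first-order terms regroup as
\begin{equation*}
\tfrac12\widetilde T_1 T_2(\partial_1 u-\partial_2 v)+\tfrac{\mathrm i}{2}T_1\widetilde T_2(\partial_2 u+\partial_1 v),
\end{equation*}
which equals $\mathbf T_1\partial_{\overline z}w$ once one recognizes $P^+\partial_{\overline z}w=\tfrac12(\partial_1 u-\partial_2 v)$ and $P^-\partial_{\overline z}w=\tfrac12(\partial_2 u+\partial_1 v)$.

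The remaining three identities follow the same scheme; what changes is only the combinatorics of signs. The swap $T_j\leftrightarrow\widetilde T_j$ between the components of $\mathbf T_1$ and $\mathbf T_0$ flips the sign of $\partial_1\chi_1$ relative to $\partial_2\chi_2$ in the accumulated zero-order terms, so in $V_1\mathbf T_1=\mathbf T_0\partial_{\overline z}$ one instead obtains $\pm\tfrac12(\partial_1\chi_1-\mathrm i\partial_2\chi_2)=\pm\partial_z\chi$, which is exactly what the $+\partial_z\chi\cdot C$ potential of $V_1$ cancels. For the two Bers-derivative identities one repeats the expansion with $\partial_z=\tfrac12(\partial_1-\mathrm i\partial_2)$ in place of $\partial_{\overline z}$: the potential $-\partial_z\chi\cdot C$ in $\mathrm d_{(F,G)}/\mathrm dz$ absorbs the $\pm\partial_z\chi$ factors collected on the $\mathbf T_0$ side, and the $+\partial_{\overline z}\chi\cdot C$ in $\mathrm d_{(F_1,G_1)}/\mathrm dz$ absorbs the $\pm\partial_{\overline z}\chi$ factors on the $\mathbf T_1$ side. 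The main obstacle is purely organizational: in each case one must track which of the four combinations $\pm\partial_1\chi_1\pm\mathrm i\partial_2\chi_2$ emerges in order to match it to the Wirtinger derivative sitting as the coefficient of $C$ on the left, but no ingredient beyond Corollary~\ref{partialeT} is required.
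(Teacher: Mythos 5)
Your proposal is correct, and it is worth noting that the paper itself offers no proof of this theorem: it is imported from \cite{CKM} and stated without argument, the paper only proving its integral counterpart (Theorem~\ref{relWwint}). Your derivation is the natural differential analogue of that integral proof: where the paper splits the $(F_1,G_1)$-integral into real and imaginary parts and pushes $T_1T_2$, $\widetilde T_1\widetilde T_2$ through the line integrals via (\ref{intTd1})--(\ref{intTd2}), you split $\partial_{\overline z}$, $\partial_z$ into $\partial_1$, $\partial_2$ and push them through $\mathbf T_0$, $\mathbf T_1$ via the rearranged Corollary~\ref{partialeT}. Your working forms $\partial_j T_j=(\partial_j\chi_j)T_j+\widetilde T_j\partial_j$ and $\partial_j\widetilde T_j=-(\partial_j\chi_j)\widetilde T_j+T_j\partial_j$ are exactly what (\ref{relationsTpartial}) gives after expanding the exponentials, and I have checked that the bookkeeping closes: e.g.\ for the first identity,
\begin{equation*}
\partial_{\overline z}\bigl(T_1T_2u+\mathrm i\widetilde T_1\widetilde T_2v\bigr)
=(\partial_{\overline z}\chi)\bigl(T_1T_2u-\mathrm i\widetilde T_1\widetilde T_2v\bigr)
+\tfrac12\widetilde T_1T_2(\partial_1u-\partial_2v)+\tfrac{\mathrm i}{2}T_1\widetilde T_2(\partial_2u+\partial_1v),
\end{equation*}
so the $-\partial_{\overline z}\chi\,C$ term of $V$ removes the zero-order part and what remains is $\mathbf T_1\partial_{\overline z}w$; the other three cases work out with the sign patterns you describe. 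Two hypotheses you use silently are harmless but should be stated: the separability $\chi=\chi_1(x)+\chi_2(y)$ (which is the standing assumption wherever $\mathbf T_0$, $\mathbf T_1$ are defined, and is what lets $\tfrac12(\partial_1\chi_1\pm\mathrm i\partial_2\chi_2)$ be identified with $\partial_{\overline z}\chi$ or $\partial_z\chi$), and the fact that the kernels $\mathbf K_j$, $\widetilde{\mathbf K}_j$ are real-valued so that $T_j$, $\widetilde T_j$ commute with $C$. What your approach buys is a self-contained proof within the present paper's toolkit, making the citation of \cite{CKM} dispensable for this statement.
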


From equalities (\ref{relVekuaCR1}), we observe that operator $\mathbf T_0$ maps complex analytic function into $(\mathrm e^{\chi_1+\chi_2},\mathrm i\mathrm e^{-(\chi_1+\chi_2)})$-pseudoanalytic function, i.e. into solutions of the Vekua equation (\ref{mainvekua}). In the same way, operator $\mathbf T_1$ maps complex analytic function into $(\mathrm e^{-\chi_1+\chi_2},\mathrm i\mathrm e^{\chi_1-\chi_2})$-pseudoanalytic function, i.e. into solution of the Vekua equation $V_1W=0$.

The integral counterpart is given in the following.

\begin{theorem}
\label{relWwint}
For any continuous complex-valued function $w$ defined in $\Omega$ the following equalities hold:
\begin{equation}
\label{relVekuaCR1int}
\int_{\Gamma}\mathbf T_0[w]\,\mathrm d_{(F_1,G_1)}\zeta=\mathbf T_1\big[\int_{\Gamma}w\,\mathrm d\zeta\big]
\end{equation}
\begin{equation}
\label{relVekuaCR2int}
\int_{\Gamma}\mathbf T_1[w]\,\mathrm d_{(F,G)}\zeta=\mathbf T_0\big[\int_{\Gamma}w\,\mathrm d\zeta\big],
\end{equation}
where $\Gamma$ is a rectifiable curve in $\Omega$.
\end{theorem}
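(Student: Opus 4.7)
My strategy is to reduce both equalities to the differential identities of Theorem~\ref{relWw} via the fundamental theorem of $(F_m,G_m)$-integration. The two statements are symmetric, so I describe only the plan for (\ref{relVekuaCR1int}); identity (\ref{relVekuaCR2int}) is obtained by exchanging the roles of $(F,G)$ and $(F_1,G_1)$.

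First, I plan to treat the case of analytic $w$. Then the $1$-form $w\,\mathrm d\zeta$ is closed on the simply connected $\Omega$, so $\Phi(z):=\int_{\Gamma}w\,\mathrm d\zeta$ is a path-independent analytic function of the endpoint with $\partial_z\Phi=w$, $\partial_{\overline z}\Phi=0$, and $\Phi(z_0)=0$ at the base point $z_0=0$. Applying the two operator identities of Theorem~\ref{relWw} to $\Phi$ yields
\[
V_1\mathbf T_1[\Phi]=\mathbf T_0[\partial_{\overline z}\Phi]=0, \qquad \frac{\mathrm d_{(F_1,G_1)}}{\mathrm dz}\mathbf T_1[\Phi]=\mathbf T_0[\partial_z\Phi]=\mathbf T_0[w],
\]
so that $\mathbf T_1[\Phi]$ is $(F_1,G_1)$-pseudoanalytic and is an $(F_1,G_1)$-antiderivative of $\mathbf T_0[w]$. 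Invoking the antiderivative formula stated just below (\ref{FGintegration}), the $(F_1,G_1)$-integral of $\mathbf T_0[w]$ along $\Gamma$ agrees with $\mathbf T_1[\Phi](z)$ modulo the term $\phi_1(z_0)F_1(z)+\psi_1(z_0)G_1(z)$ where $\mathbf T_1[\Phi]=\phi_1 F_1+\psi_1 G_1$. I expect this boundary contribution to vanish because the Volterra operators $T_j,\widetilde T_j$ act as the identity at $x_j=0$ (their kernels integrate over $[-x_j,x_j]$), whence $\mathbf T_1[\Phi](z_0)=\Phi(z_0)=0$ and the linear independence of $(F_1,G_1)$ at $z_0$ forces $\phi_1(z_0)=\psi_1(z_0)=0$.

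To extend the identity to arbitrary continuous $w$, I would approximate $w$ uniformly on compact subsets of $\Omega$ by polynomials, which is available on the simply connected $\Omega$, and pass to the limit, using that both sides of (\ref{relVekuaCR1int}) depend continuously on $w$ in the sup-norm (the transmutations $\mathbf T_0,\mathbf T_1$ being bounded Volterra operators on any rectangular neighborhood containing $\Gamma$). The main obstacle I anticipate is the initial-condition bookkeeping just described: the $(F_1,G_1)$-antiderivative is determined only up to a $\phi_1(z_0)F_1(z)+\psi_1(z_0)G_1(z)$ term, so one must carefully verify $\mathbf T_1[\Phi](z_0)=0$, which is precisely where the fact that the transmutations fix the base point enters. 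A viable alternative, bypassing the density step, is to verify (\ref{relVekuaCR1int}) directly on monomials $w(z)=az^n$ via the recurrence (\ref{recformula}) together with the period-two structure (\ref{FGF1G1}) (under which $Z_2^{(n)}=Z^{(n)}$), and then invoke Runge's approximation (Theorem~\ref{Runge}) to conclude.
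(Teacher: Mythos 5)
Your reduction to Theorem~\ref{relWw} via the $(F_1,G_1)$-antiderivative is sound \emph{for analytic} $w$: there $\Phi=\int_\Gamma w\,\mathrm d\zeta$ is a well-defined analytic primitive, the identities $V_1\mathbf T_1[\Phi]=\mathbf T_0[\partial_{\overline z}\Phi]=0$ and $\frac{\mathrm d_{(F_1,G_1)}}{\mathrm dz}\mathbf T_1[\Phi]=\mathbf T_0[w]$ apply, and your check that the transmutations fix the base point (so that $\mathbf T_1[\Phi](z_0)=\Phi(z_0)=0$, killing the $\phi_1(z_0)F_1+\psi_1(z_0)G_1$ ambiguity) closes the argument, provided $z_0$ is taken at the origin where the kernels of $T_j,\widetilde T_j$ are centred. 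This is a genuinely different route from the paper's, which proves the identity by direct computation: it expands $\int_\Gamma\mathbf T_0[w]\,\mathrm d_{(F_1,G_1)}\zeta$ from the definition (\ref{FGintegration}) with the adjoint pair $(F_1^*,G_1^*)$, then uses the commutativity of $T_1,T_2$ (and of $\widetilde T_1,\widetilde T_2$) together with the integral intertwining relations (\ref{intTd1})--(\ref{intTd2}) to pull the one-dimensional transmutations through the line integrals, and reassembles the result as $\mathbf T_1\big[\int_\Gamma w\,\mathrm d\zeta\big]$.

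However, your extension to arbitrary continuous $w$ has a genuine gap. The theorem is stated for all continuous complex-valued $w$, for which $w\,\mathrm d\zeta$ is not closed, $\Phi$ is path-dependent, and $\partial_{\overline z}\Phi\neq 0$, so the differential identities of Theorem~\ref{relWw} no longer reduce the claim to anything. Your proposed remedy --- uniform approximation of $w$ by polynomials and passage to the limit --- is not available in the form you need: a general continuous function on a planar domain is \emph{not} a normal limit of analytic polynomials (e.g.\ $\overline z$ is not, by a maximum-modulus argument), and Runge/Mergelyan-type results, including Theorem~\ref{Runge}, approximate only (pseudo)analytic functions. Approximation by polynomials in $x$ and $y$ does exist (Stone--Weierstrass), but such polynomials are not analytic, so the first half of your argument does not apply to them either. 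Your alternative route via monomials $az^n$ plus Runge suffers from the same restriction. To obtain the theorem in its stated generality you need an argument that makes no analyticity assumption on $w$, such as the paper's componentwise computation along $\Gamma$ using (\ref{intTd1})--(\ref{intTd2}).
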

\begin{proof}
Let us consider equation (\ref{relVekuaCR1int}) with $w=w_1+\mathrm i w_2$ and $\zeta=\xi+\mathrm i\eta$. We have $(F_1,G_1)=(\mathrm e^{-\chi_1+\chi_2},\mathrm i\mathrm e^{\chi_1-\chi_2})$ and $(F_1^*,G_1^*)=(-\mathrm i\mathrm e^{-\chi_1+\chi_2},\mathrm e^{\chi_1-\chi_2})$ such that from (\ref{FGintegration})
$$
\begin{array}{rcl}
\displaystyle \int_{\Gamma}\mathbf T_0[w]\,\mathrm d_{(F_1,G_1)}\zeta&=& \mathrm e^{-\chi_1(x)+\chi_2(y)}\operatorname{Re}\displaystyle \int_{\Gamma}\mathrm e^{\chi_1-\chi_2}\mathbf T_0[w]\mathrm d\zeta\\*[2ex]&+&\mathrm i \mathrm e^{\chi_1(x)-\chi_2(y)}\operatorname{Im}\displaystyle \int_{\Gamma}\mathrm e^{-\chi_1+\chi_2}\mathbf T_0[w]\mathrm d\zeta,
\end{array}
$$
where $\Gamma$ is a rectifiable curve leading from $z_0$ to $z=x+\mathrm i y$ in $\Omega$. We obtain
$$
\begin{array}{rcl}
\displaystyle \int_{\Gamma}\mathbf T_0[w]\,\mathrm d_{(F_1,G_1)}\zeta &=& \mathrm e^{-\chi_1(x)+\chi_2(y)}\displaystyle \int_{\Gamma}\mathrm e^{\chi_1-\chi_2}\big(T_1T_2w_1\mathrm d\xi-\widetilde T_1\widetilde T_2 w_2\mathrm d\eta\big)\\ &+& \mathrm i \mathrm e^{\chi_1(x)-\chi_2(y)}\displaystyle \int_{\Gamma}\mathrm e^{-\chi_1+\chi_2}\big(T_1T_2 w_1\mathrm d\eta+\widetilde T_1\widetilde T_2 w_2\mathrm d\xi\big).
\end{array}
$$
Now since operators $T_1,T_2$ commute, as well as $\widetilde T_1,\widetilde T_2$, we find
$$
\begin{array}{rcl}
\displaystyle \int_{\Gamma}\mathbf T_0[w]\,\mathrm d_{(F_1,G_1)}\zeta &=& \mathrm e^{-\chi_1(x)+\chi_2(y)}\displaystyle \int_{\Gamma}\Big(\mathrm e^{-\chi_2}T_2\mathrm e^{\chi_1}T_1w_1\mathrm d\xi-\mathrm e^{\chi_1}\widetilde T_1\mathrm e^{-\chi_2}\widetilde T_2 w_2\mathrm d\eta\Big)\\ && + \mathrm i \mathrm e^{\chi_1(x)-\chi_2(y)}\displaystyle \int_{\Gamma}\Big(\mathrm e^{-\chi_1}T_1\mathrm e^{\chi_2}T_2w_1\mathrm d\eta+\mathrm e^{\chi_2}\widetilde T_2\mathrm e^{-\chi_1}\widetilde T_1 w_2\mathrm d\xi\Big) \\*[2ex]
&=& \mathrm e^{-\chi_1(x)+\chi_2(y)}\Big(\mathrm e^{-\chi_2}T_2\displaystyle \int_{\Gamma} \mathrm e^{\chi_1}T_1w_1\mathrm d\xi-\mathrm e^{\chi_1}\widetilde T_1 \displaystyle \int_{\Gamma} \mathrm e^{-\chi_2}\widetilde T_2 w_2\mathrm d\eta\Big)\\ && + \mathrm i \mathrm e^{\chi_1(x)-\chi_2(y)}\Big(\mathrm e^{-\chi_1}T_1\displaystyle \int_{\Gamma} \mathrm e^{\chi_2}T_2w_1\mathrm d\eta+\mathrm e^{\chi_2}\widetilde T_2\displaystyle \int_{\Gamma} \mathrm e^{-\chi_1}\widetilde T_1 w_2\mathrm d\xi\Big) \\*[2ex]
&=& \mathrm e^{-\chi_1(x)+\chi_2(y)}\Big(\mathrm e^{-\chi_2}T_2\mathrm e^{\chi_1}\widetilde T_1\displaystyle \int_{\Gamma} w_1\mathrm d\xi-\mathrm e^{\chi_1}\widetilde T_1 \mathrm e^{-\chi_2} T_2 \displaystyle \int_{\Gamma}  w_2\mathrm d\eta\Big)\\ && + \mathrm i \mathrm e^{\chi_1(x)-\chi_2(y)}\Big(\mathrm e^{-\chi_1}T_1\mathrm e^{\chi_2}\widetilde T_2 \displaystyle \int_{\Gamma} w_1\mathrm d\eta+\mathrm e^{\chi_2}\widetilde T_2\mathrm e^{-\chi_1} T_1 \displaystyle \int_{\Gamma} w_2\mathrm d\xi\Big),
\end{array}
$$
where relations (\ref{intTd1}), (\ref{intTd2}) have been used in the last equality. Hence we obtain
$$
\displaystyle \int_{\Gamma}\mathbf T_0[w]\,\mathrm d_{(F_1,G_1)}\zeta=\widetilde T_1T_2 \operatorname{Re}\int_{\Gamma}w\mathrm d\zeta+
\mathrm i T_1\widetilde T_2 \operatorname{Im}\int_{\Gamma}w\mathrm d\zeta=\mathbf T_1\big[\int_{\Gamma}w\mathrm d\zeta\big].
$$

The other relation (\ref{relVekuaCR2int}) is shown in a similar way.
\end{proof}

Theorems \ref{relWw}, \ref{relWwint} and \ref{solutionVekuaSusy} can be summarized up in the following commutative diagrams. Let $w$ be  a continuously differentiable complex-valued function in $\Omega$ and $\chi$ separable, i.e. $\chi=\chi_1(x)+\chi_2(y)$:
$$
\begin{array}{rcrcl}
\mathbf T_1[w] & \overset{\mathbf T_1}{\longleftarrow} & w & \overset{\mathbf T_0}{\longrightarrow} & \mathbf T_0[w] \\*[2ex]
V_1 \downarrow && \partial_{\overline z} \downarrow && \downarrow V\\*[2ex]
\mathbf T_0[\partial_{\overline z}w] & \overset{\mathbf T_0}{\longleftarrow} & \partial_{\overline z}w & \overset{\mathbf T_1}{\longrightarrow} & \mathbf T_1[\partial_{\overline z}w].
\end{array}
$$

Moreover, when $w$ is analytic in $\Omega$ we obtain the following results:
 \small
$$
\begin{array}{rcrcrclcl}
 P\mathbf T_0[\int w \,\mathrm d\zeta]\in \ker H & \overset{P}{\longleftarrow} & \mathbf T_0[\int w \,\mathrm d\zeta] &  \overset{\mathbf T_0}{\longleftarrow} & \int w \,\mathrm d\zeta & \overset{\mathbf T_1}{\longrightarrow} & \mathbf T_1[\int w \,\mathrm d\zeta] & \overset{P}{\longrightarrow} & P\mathbf T_1[\int w \,\mathrm d\zeta]\in \ker H^{(1)}\\*[2ex]
 \downarrow \frac{1}{2}D && (F,G)\text{-}\int \uparrow  & &\int \uparrow && \uparrow (F_1,G_1)\text{-}\int && \downarrow \frac{1}{2}D_1\\*[2ex]
 P\mathbf T_1[w] \in \ker H^{(1)}& \overset{P}{\longleftarrow} & \mathbf T_1[w]& \overset{\mathbf T_1}{\longleftarrow}& w & \overset{\mathbf T_0}{\longrightarrow} & \mathbf T_0[w] & \overset{P}{\longrightarrow} & P\mathbf T_0[w]\in \ker H\\*[2ex]
\downarrow \frac{1}{2}D_1 && \frac{\mathrm d_{(F_1,G_1)}}{\mathrm dz} \downarrow && \partial_z \downarrow && \downarrow \frac{\mathrm d_{(F,G)}}{\mathrm dz} && \downarrow \frac{1}{2}D\\*[2ex]
  P\mathbf T_0[\partial_z w] \in \ker H& \overset{P}{\longleftarrow} &\mathbf T_0[\partial_z w] &\overset{\mathbf T_0}{\longleftarrow}& \partial_z w & \overset{\mathbf T_1}{\longrightarrow} & \mathbf T_1[\partial_z w]& \overset{P}{\longrightarrow} & P\mathbf T_1[\partial_z w]\in \ker H^{(1)},
\end{array}
$$
\normalsize
where $D$, $D_1$ are the Darboux and pseudo-Darboux transformations defined by (\ref{2DDarbouxTransf}) and (\ref{D1}), respectively.

\begin{corollary}
For any complex-valued  function in $C^2(\Omega)$ the following operator equalities hold:
$$
HP\mathbf T_0=-P\mathbf T_0\Delta \qquad \text{ and }\qquad H^{(1)}P\mathbf T_1=-P\mathbf T_1\Delta.
$$
\end{corollary}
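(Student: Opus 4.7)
The plan is to chain the factorizations of $H$ and $H^{(1)}$ (through the Vekua operators $V,\overline V,V_1,\overline V_1$) with the intertwining relations between the transmutation operators $\mathbf T_0,\mathbf T_1$ and the complex derivatives $\partial_z,\partial_{\overline z}$. The corollary is essentially a diagram-chase in the large commutative diagram that precedes the statement, combined with the factorization identity $\partial_z\partial_{\overline z}=\tfrac14\Delta$.

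First I would recall that the earlier factorization theorem gives
$$
HP=-4PV_1\overline V,\qquad H^{(1)}P=-4P\overline V V_1,
$$
and that Theorem \ref{relWw} provides the intertwinings
$$
\overline V\mathbf T_0=\tfrac{\mathrm d_{(F,G)}}{\mathrm dz}\mathbf T_0=\mathbf T_1\partial_z,\qquad V_1\mathbf T_1=\mathbf T_0\partial_{\overline z},
$$
where the first identity uses that the $(F,G)$-derivative coincides with $\overline V$ for the main Vekua equation.

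Then for the first equality I would compose on the right with $\mathbf T_0$ and substitute successively: starting from $HP\mathbf T_0=-4PV_1\overline V\mathbf T_0$, replace $\overline V\mathbf T_0$ by $\mathbf T_1\partial_z$, then replace $V_1\mathbf T_1$ by $\mathbf T_0\partial_{\overline z}$, and finally commute $\partial_{\overline z}$ past $\partial_z$ and use $\partial_z\partial_{\overline z}=\tfrac14\Delta$ to obtain
$$
HP\mathbf T_0=-4P\mathbf T_0\partial_{\overline z}\partial_z=-P\mathbf T_0\Delta.
$$
The proof of the $H^{(1)}$ identity is strictly parallel: from $H^{(1)}P\mathbf T_1=-4P\overline V V_1\mathbf T_1$, use $V_1\mathbf T_1=\mathbf T_0\partial_{\overline z}$ and then $\overline V\mathbf T_0=\mathbf T_1\partial_z$, yielding $H^{(1)}P\mathbf T_1=-4P\mathbf T_1\partial_z\partial_{\overline z}=-P\mathbf T_1\Delta$.

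There is no real obstacle here: both required ingredients (the factorization of $H,H^{(1)}$ via $V,V_1,\overline V$, and the intertwining of $\mathbf T_0,\mathbf T_1$ with $V,V_1,\overline V$ and $\partial_{\overline z},\partial_z$) have been established in the previous theorems, and the only thing to check is that the composition makes sense, which is guaranteed by the $C^2(\Omega)$ hypothesis on the input function together with the smoothness of the transmutation kernels established earlier. The mild bookkeeping point worth spelling out in the write-up is the correct identification $\tfrac{\mathrm d_{(F,G)}}{\mathrm dz}=\overline V$ (and its analogue for $(F_1,G_1)$), so that equations (\ref{relVekuaCR1}) and (\ref{relVekuaCR2}) can be applied in the single line of substitutions above.
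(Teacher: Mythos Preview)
Your proposal is correct and follows essentially the same route as the paper: the paper's proof is the single chain
\[
HP\mathbf T_0=-4PV_1\overline V\mathbf T_0=-4PV_1\mathbf T_1\partial_z=-4P\mathbf T_0\partial_{\overline z}\partial_z=-P\mathbf T_0\Delta,
\]
using (\ref{HP=4PVV}), (\ref{relVekuaCR1}) and (\ref{relVekuaCR2}), with the second identity handled analogously. Your additional remark that (\ref{relVekuaCR2}) is applied via the identification $\tfrac{\mathrm d_{(F,G)}}{\mathrm dz}=\overline V$ is a helpful clarification that the paper leaves implicit.
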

\begin{proof}
Using (\ref{HP=4PVV}), (\ref{relVekuaCR1}) and (\ref{relVekuaCR2}) we find
$$
HP\mathbf T_0=-4PV_1\overline V\mathbf T_0=-4PV_1\mathbf T_1\partial_z=-4P\mathbf T_0\partial_{\overline z}\partial_z=-P\mathbf T_0\Delta.
$$
The other relation is obtained in a similar way.
\end{proof}

\bigskip

\section{Conclusion}

Relations between Vekua and Bers derivatives operators were obtained in terms of two-dimensional Darboux and pseudo-Darboux transformations. These relations enable us to factorize hamiltonians $H^{(0)}$ and $H^{(1)}$ of the superhamiltonian $\widehat H$ considered in this work in terms of one Vekua operator and one Bers derivative operator. An infinite system of solutions of the ground state for the two-dimensional SUSY QM system was given in terms of the formal powers. Under certain additional assumptions, completeness results for the system of formal powers, analogous to the expansion theorem and Runge's approximation theorem from classical analysis, were established. For the specific case where the superpotential is separable, formal powers were explicitly constructed. That specific case allowed us to use transmutation operators to transform complex analytic powers to the constructed formal powers. Moreover, we have shown how transmutation operators are related to Vekua, Bers derivative, Bers integral operators and hamiltonian components of the superhamiltonian.

Our approach can also be applied to the two-dimensional SUSY QM system considered in this work with the superpotential $\chi$ being a complex
function, though in this case complex numbers become insufficient, and one should consider
the bicomplex pseudoanalytic function theory \cite{CK}.

\subsubsection*{Acknowledgement}

A.B. acknowledges two scholarships: one from NSERC, where a part of this work was done, and one from the Institut des Sciences Math\'ematiques (ISM). The research of S.T. is partly supported by grant from NSERC of Canada.

\end{document}